\long\def\@makecaption#1#2{%
  \vskip\abovecaptionskip
   \sbox\@tempboxa{#1:#2}%
  \ifdim \wd\@tempboxa >\hsize
   {\bfseries #1:} #2\par
  \else
    \global \@minipagefalse
   \hb@xt@\hsize{\box\@tempboxa\hfil}%
  \fi
  \vskip\belowcaptionskip}
\renewcommand{\paragraph}{\roman{paragraph}}
\newcommand{\Z}{\mathbb{Z}}
\newcommand{\F}{\mathbb{F}}
\newtheorem{thm}{Theorem}[section]
\newtheorem{lem}[thm]{\scshape  Lemma}
\newtheorem{coro}[thm]{\scshape  Corollary}
\newtheorem{prop}[thm]{\scshape  Proposition}
\newtheorem{defe}[thm]{\scshape Definition}
\newtheorem{exa}[thm]{\scshape   Example}
\newtheorem{rem}[thm]{\scshape   Remark}
\begin{document}

\title{\bf Balanced reconstruction codes for single edits
\thanks{The research of X. Zhang was supported by the NSFC under Grants No. 12171452 and No. 11771419, the Anhui Initiative in Quantum
Information Technologies under Grant No. AHY150200,  the National Key Research and Development Program of China (2020YFA0713100), and the Innovation Program for Quantum Science and Technology (2021ZD0302904).
The research of R. Wu was supported by China Postdoctoral Science Foundation under Grant No. 2021M703098.}
}
\author{
Rongsheng Wu\thanks{School of Mathematical Sciences, University of Science and Technology of China, Hefei 230026, Anhui, China (e-mail:
wrs2510@ustc.edu.cn)},
and
Xiande Zhang\thanks{CAS Wu Wen-Tsun Key Laboratory of Mathematics, School of Mathematical Sciences, University of Science and Technology of China, Hefei, Anhui, 230026,
P.R. China (e-mail: drzhangx@ustc.edu.cn)}\ \thanks{Hefei National Laboratory, University of Science and Technology of China, Hefei 230088, China}
}
\date{}
\maketitle
{\bf Abstract:} {Motivated by the sequence reconstruction problem initiated by Levenshtein,  reconstruction codes were introduced by Cai \emph{et al}. to combat errors when a fixed number of noisy channels are available. The central problem on this topic is to design codes with sizes as large as possible, such that  every codeword can be uniquely reconstructed from any $N$ distinct noisy reads, where $N$ is fixed. In this paper, we study binary reconstruction codes with the constraint that
every codeword is balanced, which is a common requirement in the technique of DNA-based storage. For all possible channels with a single edit error and their variants, we design asymptotically optimal balanced reconstruction codes for all $N$, and show that the number of their redundant symbols decreases from $\frac{3}{2}\log _2 n+O(1)$ to $\frac{1}{2}\log_2n+\log_2\log_2n+O(1)$, and finally to $\frac{1}{2}\log_2n+O(1)$ but with different speeds, where $n$ is the length of the code. Compared with the unbalanced case, our results imply that the balanced property does not reduce the rate of the reconstruction code in the corresponding codebook.

}

{\bf Keywords:} Binary balanced codes; sequence reconstruction; error metric; read coverage; Varshamov-Tenengolts codes.

\section{Introduction}\label{Sec:1}

The sequence reconstruction problem has been extensively studied in the literature by many researchers since 2001 due to Levenshtein \cite{L01,L01a}. The original motivation was to combat errors by repeatedly transmitting a message without coding in situations when no other method is feasible. One of the central problems in this area is to determine the necessary number of transmissions for an arbitrary message, or equivalently, the maximum intersection size between the error-balls of two different words in a codebook. Each transmission is referred to as an independent noisy channel.
Levenshtein \cite{L01,L01a} addressed this problem for combinatorial channels with
several types of errors of most interest in the field of coding theory,
 namely, substitutions, insertions and deletions. Later, much work has been done concerning the sequence reconstruction problems for different error models, such as signed permutations distorted by reversal errors \cite{E05,E08}, and general error
graphs \cite {LS09}.


\par
Note that for
all the works mentioned above, the transmitted sequences are selected from the entire space without coding. Recently due to applications in DNA-based storage, the sequence reconstruction problem was studied under the setting where the transmitted sequences are chosen from a given code with a certain error-correcting property.
For example in \cite{YSLB13}, permutation codes with prescribed minimum Kendall's $\tau$ distances $1,2$ and $2r$ were considered. Gabrys and Yaakobi \cite{GY16,GY18} studied the  channels causing $t$ deletions where the transmitted sequences belong to a binary single-deletion-correcting code.  In 2017, Sala \emph{et al.} \cite{SGSD17} studied the insertion channels where the transmitted sequences have pairwise edit distance at least $2l$, for any $l\ge 0$, which  generalizes the results of Levenshtein in \cite{L01,L01a}.

\par

Considering a fixed number of erroneous channels during the sequencing process of a DNA strand, Cai \emph{et al.} \cite{CKNY21} (see also \cite{CKY20}) proposed the dual problem of the sequence reconstruction as follows. In most sequencing platforms, multiple copies of the same DNA strand are created after undergoing polymerase chain reaction (PCR). The sequencer reads all copies and provides many possibly inaccurate reads to the user, who then needs to further
reconstruct the original DNA strand from these noisy reads.  When a fixed number of distinct noisy reads are provided, the main task is to design a codebook such that every codeword can be uniquely reconstructible from these distinct noisy reads. This problem has a quite different flavor from the original reconstruction problem, but can be viewed as an extension of the classical error-correcting codes. Leveraging on these multiple channels (or reads), one can increase the information capacity, or equivalently, reduce the number of redundant bits for these next-generation devices. In \cite{CKNY21}, Cai \emph{et al.} almost completely determined the asymptotic optimal redundancy of the code when the channels are affected by a single edit. Chrisnata \emph{et al}. \cite{CK21,CKY22} extended the case for $t$ deletions, and provided an explicit code that is uniquely reconstructible with certain parameters in the two-deletion channel.

In this paper, we follow the framework initiated by Cai \emph{et al.} \cite{CKNY21}, to study the so-called \emph{reconstruction codes} with additional constraints required in DNA-based storage technique. The first interesting constraint is the balanced property of sequences. It has been shown that binary
balanced error-correcting codes play a significant role in constructing GC-balanced error-correcting codes [35], which are widely used in the DNA coding theory [28, 36] since they are more
stable than unbalanced DNA strands and have better coverage during sequencing. Further,
it is well known that balanced codes are DC-free \cite{DH88} and have attractive applications in the encoding of unchangeable data on a laser disk \cite{K86,L84}.
Much efforts have been devoted to constructing binary balanced error-correcting codes in the literature \cite{AB93,FW02,TB89}.

By the above considerations, this paper focuses on the study of binary reconstruction codes able to uniquely recover a \emph{balanced} sequence from a fixed number of  erroneous channels affected by a single edit (a substitution, deletion,
or insertion) and its variants. For all related errors, we determine the optimal redundancy and construct asymptotically optimal codes. In particular, these results show that the balanced property does not reduce the ratio of the reconstruction code to the corresponding codebook compared to the unbalanced one.


The rest of this paper is organized as follows. Section \ref{Sec:2} introduces the main notation and provides the necessary background needed in the subsequent sections. In addition, some sufficient and necessary conditions for the intersection size of error-balls are provided. Then Sections \ref{Sec:3}-\ref{Sec:4} are devoted to characterizing the asymptotic optimal redundancy of a balanced $(n,N;B_2)$-reconstruction code with $B_2\in\{B^{\rm D},B^{\rm I},B^{\rm DI}\}$ and $B_2\in\{B^{\rm SD},B^{\rm SI},B^{\rm edit}\}$, respectively. Finally, we conclude this paper in Section 5.

\section{Preliminaries}\label{Sec:2}
Let $\F_2$ denote the binary alphabet $\{0,1\}$, and let  $\F_2^n$ denote the set of all binary sequences of length $n$. The Hamming weight of $\textbf{x}\in \F_2^n$, denoted by ${\rm wt_H}(\textbf{x})$, is the number of indices $i$ where $x_i\neq 0$, and the
Hamming distance $d_{\rm H}(\textbf{x},\textbf{y})$ between two words $\textbf{x},\textbf{y}\in \F_2^n$ is defined to be the number
of coordinates in which $\textbf{x}$ and $\textbf{y}$ differ.
Assume that $n$ is even throughout this paper for convenience. A word in $\F_2^n$ is \emph{balanced} if it has exactly $n/2$ ones. Let $U_n$ be the set of all balanced words in $\F_2^n$. A balanced code is a subset of $U_n$.

We introduce the concept of reconstruction codes as in \cite{CKNY21}.
First, we define the following seven error-ball functions for $\textbf{x}\in \F_2^n$. Let $B^{\rm S}(\textbf{x})$, $B^{\rm D}(\textbf{x})$ and $B^{\rm I}(\textbf{x})$ denote the set of all words obtained from $\textbf{x}$ via at most one substitution, one deletion, and one insertion, respectively.  Combining these functions, we define further that
$$B^{\rm DI}(\textbf{x}):=B^{\rm D}(\textbf{x})\cup B^{\rm I}(\textbf{x}),\ \ \ B^{\rm SD}(\textbf{x}):=B^{\rm S}(\textbf{x})\cup B^{\rm D}(\textbf{x}),$$
$$B^{\rm SI}(\textbf{x}):=B^{\rm S}(\textbf{x})\cup B^{\rm I}(\textbf{x}),\ \ \ B^{\rm edit}(\textbf{x}):=B^{\rm S}(\textbf{x})\cup B^{\rm D}(\textbf{x})\cup B^{\rm I}(\textbf{x}).$$
 For example, let $\textbf{x}=1010 \in U_4$. Then $B^{\rm S}(\textbf{x})=\{1010,0010,1110,1000,1011\}\subseteq \F_2^4$, $B^{\rm D}(\textbf{x})=\{010,110,100,101\}\subseteq \F_2^3$, $B^{\rm I}(\textbf{x})=\{01010,11010,10010,10110,10100,10101\}\subseteq \F_2^5$, and the remaining error-balls for $\textbf{x}$ are the corresponding union between $B^{\rm S}(\textbf{x})$, $B^{\rm D}(\textbf{x})$ and $B^{\rm I}(\textbf{x})$ above.

Let $B_2$ be the noisy channel corresponding to any one of the above functions, that is, $B_2\in \{B^{\rm S}, B^{\rm D}, B^{\rm I}, B^{\rm DI}, B^{\rm SD}, B^{\rm SI},B^{\rm edit}\}$.
For any $\mathcal{C}\subseteq \F_2^n$, the \emph{read coverage} of $\mathcal{C}$ for channel $B_2$, denoted by $\nu(\mathcal{C};B_2)$, is defined to be the maximum intersection size between error-balls of any two different codewords in $\mathcal{C}$. More specifically,
$$\nu(\mathcal{C};B_2)=\max\{|B_2(\textbf{x})\cap B_2(\textbf{y})|:\textbf{x},\textbf{y}\in \mathcal{C}\ {\rm and}\ \textbf{x}\neq \textbf{y}\}.$$
The quantity $\nu(\mathcal{C};B_2)$ was introduced by Levenshtein \cite{L01}, who showed that the number of channels
required to reconstruct a codeword from $\mathcal{C}$ is at least $\nu(\mathcal{C};B_2)+1$. The problem of determining $\nu(\mathcal{C};B_2)$ is referred to as the sequence reconstruction
problem.

For a fixed constant $N$, if a code $\mathcal{C}\subseteq \F_2^n$ satisfies  $\nu(\mathcal{C};B_2)<N$, then we call $\mathcal{C}$  an $(n,N;B_2)$-reconstruction code. As mentioned in Introduction, we focus on balanced reconstruction codes in this paper. The fundamental problem on this topic is to estimate the minimum number of redundant bits for such a code. Define the \emph{redundancy} of a code $\mathcal{C}\subseteq \F_2^n$ to be the value $n-\log_2 |\mathcal{C}|$. Then we are interested in studying the following quantity,
$$\rho_b(n,N;B_2)=\min\{n-\log_2 |\mathcal{C}|: \mathcal{C}\subseteq U_n \ {\rm and} \  \nu(\mathcal{C};B_2)<N\}.$$
Note that the case $N = 1$ is the classical model which has been studied for years in the design of balanced error-correcting
codes \cite{AB93,FW02,TB89}. 

\subsection{An easy result}

We determine the optimal redundancy for the channel causing a single substitution error in this subsection, that is $\rho_{b}(n,N;B^{\rm S})$. We will apply the following useful estimation of binomials (see e.g., \cite[Proposition 3.6.2]{JJ09}) frequently.

\begin{lem}\cite{JJ09}\label{binomial}
For all even $n\geq 2$, we have
$$\frac{2^n}{\sqrt{2n}} \leq \binom{n}{n/2} \leq \frac{2^n}{\sqrt{n}}.$$
\end{lem}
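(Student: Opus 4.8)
The plan is to reduce everything to the standard Wallis-type product for the central binomial coefficient and then sandwich it between two telescoping products. Writing $m=n/2$ and peeling off the even factors from $(2m)!$, one has the identity
$$\binom{n}{n/2}=\binom{2m}{m}=2^{2m}\prod_{k=1}^{m}\frac{2k-1}{2k},$$
so setting $W_m:=\prod_{k=1}^m\frac{2k-1}{2k}=\binom{n}{n/2}/2^n$, the claim is equivalent to the pair of inequalities $\frac{1}{2\sqrt m}\le W_m\le\frac{1}{\sqrt{2m}}$ (here $2^n/\sqrt{2n}=2^{2m}/\sqrt{4m}$ and $2^n/\sqrt n=2^{2m}/\sqrt{2m}$). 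The natural idea is to square $W_m$ so that each factor $\bigl(\tfrac{2k-1}{2k}\bigr)^2$ can be compared with a ratio of consecutive integers that collapses under multiplication.

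For the upper bound I would use $(2k-1)(2k+1)=4k^2-1<4k^2=(2k)^2$, which gives $\bigl(\tfrac{2k-1}{2k}\bigr)^2<\tfrac{2k-1}{2k}\cdot\tfrac{2k}{2k+1}=\tfrac{2k-1}{2k+1}$. Multiplying over $k=1,\dots,m$ makes the right-hand side telescope,
$$W_m^2<\prod_{k=1}^m\frac{2k-1}{2k+1}=\frac{1}{2m+1}<\frac{1}{2m},$$
whence $W_m<\tfrac{1}{\sqrt{2m}}$, the desired upper bound.

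For the lower bound I would use the companion inequality $(2k-1)^2=4k^2-4k+1>4k^2-4k=(2k-2)(2k)$, i.e. $\bigl(\tfrac{2k-1}{2k}\bigr)^2>\tfrac{k-1}{k}$. The one point that needs care — and the only real obstacle — is that this comparison is vacuous at $k=1$ (the factor $\tfrac{k-1}{k}$ vanishes there, collapsing the whole product to $0$), so I would first peel off the $k=1$ term and write $W_m=\tfrac12\prod_{k=2}^m\tfrac{2k-1}{2k}$. Then
$$W_m^2=\frac14\prod_{k=2}^m\Bigl(\frac{2k-1}{2k}\Bigr)^2>\frac14\prod_{k=2}^m\frac{k-1}{k}=\frac{1}{4m},$$
again by telescoping, giving $W_m>\tfrac{1}{2\sqrt m}$ for $m\ge 2$; the base case $m=1$ (that is $n=2$) is the equality $W_1=\tfrac12=\tfrac{1}{2\sqrt1}$, checked directly, which accounts for the non-strict sign in the stated bound. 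Combining the two displays yields both inequalities for all even $n\ge2$. A routine alternative would be to prove the same bounds by induction on $m$, but the straightforward induction with the constants $1/\sqrt{2m}$ and $1/(2\sqrt m)$ does not propagate cleanly — the multiplicative step $W_{m+1}=W_m\tfrac{2m+1}{2m+2}$ pushes the upper estimate the wrong way — so the telescoping argument above is the more economical route.
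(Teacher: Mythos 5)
Your proof is correct: the factorization $\binom{2m}{m}=2^{2m}\prod_{k=1}^{m}\frac{2k-1}{2k}$, both telescoping comparisons, and the careful handling of the $k=1$ factor in the lower bound (peeling it off, with $m=1$ checked as the equality case) all hold, and the constants translate correctly since $2^n/\sqrt{2n}=2^{2m}/(2\sqrt{m})$ and $2^n/\sqrt{n}=2^{2m}/\sqrt{2m}$. The paper gives no proof of its own --- the lemma is quoted from \cite{JJ09} --- and your argument is essentially the standard Wallis-product/telescoping proof found there, so there is nothing further to compare.
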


The case $N=1$ is  briefly explained in the next example, which is equivalent to the design of classical binary balanced error-correcting codes.

\begin{exa}\label{substitution1}
Let $\mathcal{C}$ be a balanced $(n,1;B^{\rm S})$-reconstruction code with maximum size, then $\mathcal{C}$ is a binary code of length $n$ with minimum Hamming distance $4$ and constant weight $n/2$. By the lower bound on the size of constant weight codes discovered by Graham and Sloane \cite[Theorem 1]{GS80}, we have
\begin{eqnarray*}
 |\mathcal{C}| &\geq& \frac{1}{n}\binom{n}{n/2} \geq \frac{2^{n-1/2}}{n^{3/2}},
\end{eqnarray*}
where the last inequality follows from  Lemma \ref{binomial}.
The upper bound is due to Agrell, Vardy and Zeger \cite{AVZ00} by letting $w=n/2,\ d=4$ and $ t=n/2-1$ in \cite[Theorem 12]{AVZ00},
\begin{eqnarray*}
 |\mathcal{C}| \leq \binom{n}{n/2-1}\bigg/\binom{n/2}{n/2-1} &\leq& \frac{2^{n+1}}{n^{3/2}},
\end{eqnarray*}
where the last inequality follows from Lemma \ref{binomial} again.
\end{exa}

A simple application of \cite[Corollary 1]{L01} gives the following lemma.

\begin{lem}\label{Scondition}
Let $\textbf{x}$ and $\textbf{y}$ be different words in $U_n$. Then
\[
B^{\rm S}(\textbf{x})\cap B^{\rm S}(\textbf{y})=\begin{cases}
2, & {\rm if}\ d_{\rm H}(\textbf{x},\textbf{y})=2, \\
0, & {\rm if}\ d_{\rm H}(\textbf{x},\textbf{y})\geq 4.
\end{cases}
\]
\end{lem}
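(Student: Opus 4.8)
The plan is to exploit the elementary fact that $B^{\rm S}(\textbf{x})$ is precisely the closed Hamming ball of radius $1$ centered at $\textbf{x}$, namely $\{\textbf{z}\in\F_2^n: d_{\rm H}(\textbf{x},\textbf{z})\le 1\}$, so that $B^{\rm S}(\textbf{x})\cap B^{\rm S}(\textbf{y})$ is exactly the set of words $\textbf{z}$ satisfying $d_{\rm H}(\textbf{x},\textbf{z})\le 1$ and $d_{\rm H}(\textbf{y},\textbf{z})\le 1$. First I would record that since $\textbf{x},\textbf{y}\in U_n$ carry the same weight $n/2$, their Hamming distance is necessarily even; as $\textbf{x}\neq\textbf{y}$, this forces $d_{\rm H}(\textbf{x},\textbf{y})\in\{2,4,6,\dots\}$, so the two cases listed in the statement are in fact exhaustive over $U_n$.

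For the case $d_{\rm H}(\textbf{x},\textbf{y})\ge 4$, a one-line triangle-inequality argument suffices: any $\textbf{z}$ in the intersection would give $d_{\rm H}(\textbf{x},\textbf{y})\le d_{\rm H}(\textbf{x},\textbf{z})+d_{\rm H}(\textbf{z},\textbf{y})\le 2$, a contradiction, whence the intersection is empty.

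The substantive case is $d_{\rm H}(\textbf{x},\textbf{y})=2$, and this is where I expect the only genuine bookkeeping. Let $i,j$ be the two coordinates on which $\textbf{x}$ and $\textbf{y}$ differ; balancedness guarantees that at these positions $\textbf{x}$ carries one $0$ and one $1$, say $x_i=1,\,x_j=0$ while $y_i=0,\,y_j=1$. For $\textbf{z}$ in the intersection the triangle inequality again forces $d_{\rm H}(\textbf{x},\textbf{z})=d_{\rm H}(\textbf{z},\textbf{y})=1$. I would then argue that $\textbf{z}$ cannot differ from $\textbf{x}$ at any coordinate $k\notin\{i,j\}$: such a flip leaves the two disagreements at $i,j$ intact and introduces a third, giving $d_{\rm H}(\textbf{z},\textbf{y})\ge 3$. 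Hence $\textbf{z}$ arises from $\textbf{x}$ by flipping exactly one of the positions $i$ or $j$, and a direct check confirms both choices yield $d_{\rm H}(\textbf{z},\textbf{y})=1$: flipping position $i$ produces the word agreeing with $\textbf{y}$ at $i$ and with $\textbf{x}$ at $j$, and flipping $j$ produces the symmetric word. These two words are distinct and both lie in the intersection, so its size is exactly $2$.

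The whole argument is elementary; the only place demanding care is verifying, in the $d_{\rm H}=2$ case, that no admissible $\textbf{z}$ can lie outside the two differing coordinates and that the two candidate words are genuinely distinct members of both balls. An even shorter route is to invoke Levenshtein's \cite[Corollary 1]{L01} directly, which computes the intersection size of single-substitution balls as a function of Hamming distance; the self-contained argument above simply specializes that corollary to the balanced regime, in which odd distances never occur.
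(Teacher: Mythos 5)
Your proof is correct, and it is genuinely more self-contained than what the paper does: the paper disposes of this lemma in one line by invoking Levenshtein's result (\cite[Corollary 1]{L01}), which gives the intersection size of single-substitution balls as a function of the Hamming distance, specialized here to balanced words where odd distances cannot occur. Your argument instead proves everything from scratch: identifying $B^{\rm S}(\textbf{x})$ with the closed Hamming ball of radius $1$, using the parity of distances between equal-weight words to see the two cases are exhaustive, killing the case $d_{\rm H}(\textbf{x},\textbf{y})\geq 4$ by the triangle inequality, and in the case $d_{\rm H}(\textbf{x},\textbf{y})=2$ pinning down the intersection as exactly the two words obtained from $\textbf{x}$ by flipping one of the two disagreement coordinates. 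All steps check out; one small observation is that balancedness plays no role in your distance-$2$ count (any two binary words at Hamming distance $2$ have substitution-ball intersection of size exactly $2$) — it is only needed for the parity claim that makes the lemma's case split exhaustive, and your write-up could make that separation explicit. What the paper's route buys is brevity and consistency with the rest of its framework, which leans on \cite{L01} throughout; what your route buys is a proof readable without chasing the reference, and a clear view of exactly where the balanced hypothesis matters.
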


Then it is ready to determine the optimal redundancy for the error-ball $B^{\rm S}$
 as follows.

\begin{thm}\label{substitution}
For the error-ball $B^{\rm S}$, we have
\[
\rho_{b}(n,N;B^{\rm S})=\begin{cases}
\frac{3}{2}\log_2n+\Theta(1), & N\in \{1,2\}, \\
\Delta, & N\geq 3,
\end{cases}
\]where $\Delta:= n-\log_2\binom{n}{n/2}=\frac{1}{2}\log n+\Theta(1)$ is the redundancy of $U_n$.
\end{thm}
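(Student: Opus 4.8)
The plan is to reduce the whole statement to the dichotomy recorded in Lemma \ref{Scondition}. The crucial structural observation is that two distinct balanced words always have even pairwise Hamming distance (their supports have equal size $n/2$, so the symmetric difference is even), and hence distance at least $2$. Combining this with Lemma \ref{Scondition}, the intersection size $|B^{\rm S}(\textbf{x})\cap B^{\rm S}(\textbf{y})|$ takes only the two values $2$ (when $d_{\rm H}(\textbf{x},\textbf{y})=2$) and $0$ (when $d_{\rm H}(\textbf{x},\textbf{y})\geq 4$), never the value $1$. This integrality gap is what drives the entire theorem, so I would state it up front.

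First I would dispatch the regime $N\in\{1,2\}$. Since $\nu(\mathcal{C};B^{\rm S})$ is a maximum of intersection sizes, each lying in $\{0,2\}$, the constraint $\nu(\mathcal{C};B^{\rm S})<2$ is equivalent to $\nu(\mathcal{C};B^{\rm S})=0$, which is in turn the same as $\nu(\mathcal{C};B^{\rm S})<1$. Thus $N=1$ and $N=2$ impose \emph{identical} conditions: no two codewords may lie at Hamming distance $2$, i.e. $\mathcal{C}$ is a constant-weight-$(n/2)$ code of minimum Hamming distance at least $4$. The optimal redundancy of such codes has already been pinned down in Example \ref{substitution1}, where the Graham--Sloane lower bound and the Agrell--Vardy--Zeger upper bound together give $|\mathcal{C}|=\Theta(2^n/n^{3/2})$, that is, redundancy $\frac{3}{2}\log_2 n+\Theta(1)$. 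I would simply invoke that example.

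Next I would treat $N\geq 3$. Here $N-1\geq 2$, and since the maximal possible intersection size between two distinct balanced words is $2$, every subset $\mathcal{C}\subseteq U_n$ satisfies $\nu(\mathcal{C};B^{\rm S})\leq 2\leq N-1<N$. The reconstruction constraint is therefore vacuous, so the largest admissible code is the entire balanced set $\mathcal{C}=U_n$. Its redundancy is by definition $\Delta=n-\log_2\binom{n}{n/2}$, which Lemma \ref{binomial} evaluates as $\frac{1}{2}\log_2 n+\Theta(1)$, completing this case.

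I do not anticipate a genuine obstacle: the argument rests entirely on the integrality gap in Lemma \ref{Scondition}, which forces the $N=2$ condition to collapse onto the $N=1$ condition and renders the $N\geq 3$ case trivial. The only point requiring care is the opening remark that balanced words have even pairwise distance, so that the value $1$ truly cannot arise; this is precisely what separates the two regimes cleanly and makes the case split in the statement the natural one.
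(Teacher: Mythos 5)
Your proposal is correct and follows essentially the same route as the paper: the paper likewise invokes Example \ref{substitution1} for $N=1$, uses Lemma \ref{Scondition} to collapse $N=2$ onto $N=1$, and observes that for $N\geq 3$ the constraint is vacuous so $U_n$ itself is optimal. Your writeup is in fact more careful than the paper's two-line proof --- in particular the explicit remark that balanced words have even pairwise distance (so the value $1$ never occurs), and the clean statement that $\nu(U_n;B^{\rm S})=2$ makes the $N\geq 3$ case trivial, where the paper's phrase ``$\nu(U_n;B^{\rm S})=\Delta$'' appears to be a typo.
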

\begin{proof}
The value $\rho_{b}(n,1;B^{\rm S})$ follows immediately from Example \ref{substitution1}. For the case $N\geq 2$, we clearly have $\rho_{b}(n,1;B^{\rm S})=\rho_{b}(n,2;B^{\rm S})$ from Lemma \ref{Scondition} and the fact that $\nu(U_n;B^{\rm S})=\Delta$.
\end{proof}

\subsection{The intersection size of various error-balls}

This subsection deals with the intersection size between error-balls of any two different words in $U_n$. We need the following notion of confusability which was introduced in \cite{CKNY21} for general $q$-ary words. Here, we restrict the definition to balanced words.


\begin{defe}\label{confusableAB}Suppose $\textbf{x}=\textbf{u}\textbf{c}\textbf{v} $ and $  \textbf{y}=\textbf{u}\textbf{c}'\textbf{v}$ are two distinct words in $U_n$ for some subwords $\textbf{u}$, $\textbf{v}$, $\textbf{c}$ and $\textbf{c}'$. We say that $\textbf{x}$ and $\textbf{y}$ are
\begin{enumerate}
  \item  \emph{Type-A-confusable with $m$} if
$\{\textbf{c},\textbf{c}'\}$ is of the form $\{(10)^m,(01)^m\}$ for $m\geq 1$; and
  \item  \emph{Type-B-confusable with $m$} if $\{\textbf{c},\textbf{c}'\}$ is either the form $\{01^m,1^m0\}$ or $\{10^m,0^m1\}$ for $m\geq 2$.
\end{enumerate}
\end{defe}

\begin{exa}
Let $\textbf{x}=11101000, \ \textbf{y}=11010100\in U_8$. Then $\textbf{x}$ and $\textbf{y}$ are Type-A-confusable with $m=2$, $\textbf{u}=11$ and $\textbf{v}=00$. Similarly, $\textbf{x}'=111000$ and $\textbf{y}'=101100$ are Type-B-confusable with $m=2$, $\textbf{u}=1$ and $\textbf{v}=0$.

\end{exa}

Let $\textbf{x}$ and $\textbf{y}$ be two distinct words in $U_n$. For $B_2\in\{B^{\rm D},B^{\rm I}\}$, we know that $B_2(\textbf{x})\cap B_2(\textbf{y})\leq 2$ by observations in \cite{L01} or \cite{L01a}, and therefore we have $\rho_{b}(n,N;B_2)=\Delta$ for $N\geq 3$. Next, we characterize the intersection sizes of error-balls for different channels as in  \cite{CKNY21}.

\begin{prop}\label{D-Icondition}
Let $B_2\in \{B^{\rm D},B^{\rm I}\}$, and let $\textbf{x}, \textbf{y}$ be two distinct words in $U_n$.
\begin{enumerate}
    \item[\rm (i)] If $d_{\rm H}(\textbf{x},\textbf{y})=2$, then $|B_2(\textbf{x})\cap B_2(\textbf{y})|=1$ if and only if $\textbf{x}$ and $\textbf{y}$ are Type-B-confusable.
    \item[\rm (ii)] $|B_2(\textbf{x})\cap B_2(\textbf{y})|=2$ if and only if $\textbf{x}$ and $\textbf{y}$ are Type-A-confusable.
    \item[\rm (iii)] $|B^{\rm D}(\textbf{x})\cap B^{\rm D}(\textbf{y})|=|B^{\rm I}(\textbf{x})\cap B^{\rm I}(\textbf{y})|$.
\end{enumerate}
\end{prop}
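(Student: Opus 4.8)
The plan is to treat the deletion channel $B^{\rm D}$ first and recover the insertion channel $B^{\rm I}$ at the end through part (iii). Since $\textbf{x}\neq\textbf{y}$, the length-$n$ portions of the two error-balls are disjoint, so $B^{\rm D}(\textbf{x})\cap B^{\rm D}(\textbf{y})$ consists entirely of common single-deletion descendants, that is, words $\textbf{z}$ of length $n-1$ obtained by one deletion from $\textbf{x}$ and one from $\textbf{y}$. The first reduction uses balance: if $\textbf{z}$ is such a descendant then its weight, computed from either side, equals $n/2$ minus the deleted symbol, so the symbol deleted from $\textbf{x}$ and the symbol deleted from $\textbf{y}$ must coincide; equivalently, $\textbf{y}$ is obtained from $\textbf{x}$ by deleting one symbol and re-inserting an identical symbol elsewhere. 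Stripping the longest common prefix $\textbf{u}$ and suffix $\textbf{v}$, I would write $\textbf{x}=\textbf{u}\textbf{a}\textbf{v}$, $\textbf{y}=\textbf{u}\textbf{b}\textbf{v}$ with $|\textbf{a}|=|\textbf{b}|=\ell$ and with $\textbf{a},\textbf{b}$ differing in their first and last symbols, reducing the whole question to counting common single-deletion descendants of the differing blocks $\textbf{a}$ and $\textbf{b}$.

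For the sufficiency directions I would simply exhibit descendants. If $\textbf{x},\textbf{y}$ are Type-A-confusable with $\{\textbf{c},\textbf{c}'\}=\{(10)^m,(01)^m\}$, then deleting the leading $1$ of $(10)^m$ and the trailing $1$ of $(01)^m$ both produce $\textbf{u}\,0(10)^{m-1}\textbf{v}$, while deleting the trailing $0$ of $(10)^m$ and the leading $0$ of $(01)^m$ both produce the distinct descendant $\textbf{u}\,(10)^{m-1}1\,\textbf{v}$; this gives at least two common descendants, and the bound $|B^{\rm D}(\textbf{x})\cap B^{\rm D}(\textbf{y})|\le 2$ recalled above forces equality, proving the ``if'' part of (ii). If instead $\textbf{x},\textbf{y}$ are Type-B-confusable, say $\{\textbf{c},\textbf{c}'\}=\{10^m,0^m1\}$ with $m\ge 2$, then deleting the single $1$ from each side yields $\textbf{u}\,0^m\,\textbf{v}$, and inspecting the remaining deletions of $10^m$ and $0^m1$ (which give the unmatched $10^{m-1}$ and $0^{m-1}1$) shows this is the only common descendant, giving exactly $1$ and the ``if'' part of (i). I would also record here that Type-B forces $d_{\rm H}(\textbf{x},\textbf{y})=2$, and that the only Type-A pair with $d_{\rm H}=2$ is $m=1$, which explains why size $1$ is tied to Type-B and size $2$ to Type-A.

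The necessity directions are where the real work lies. Starting from a nonempty intersection, I have a common descendant $\textbf{z}=\textbf{u}\textbf{c}\textbf{v}$ where $\textbf{c}$ is a length-$(\ell-1)$ common deletion descendant of $\textbf{a}$ and $\textbf{b}$, with $a_1\neq b_1$ and $a_\ell\neq b_\ell$. The hard part will be a careful case analysis on the positions of the two single deletions inside $\textbf{a}$ and $\textbf{b}$: because $a_1\neq b_1$ at least one deletion must fall on the first symbol, because $a_\ell\neq b_\ell$ at least one must fall on the last symbol, and chasing these constraints through the interior forces $\textbf{a},\textbf{b}$ to be either fully alternating (the Type-A pattern) or a single shifted unary block (the Type-B pattern $\{10^m,0^m1\}$ or $\{01^m,1^m0\}$). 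Counting how many deletion pairs yield distinct descendants then separates the sizes: the alternating pattern admits two merges, so $|B^{\rm D}(\textbf{x})\cap B^{\rm D}(\textbf{y})|=2$, whereas the shifted unary block admits only one, so the size is $1$. I expect the delicate point to be the interaction with the prefix/suffix cut, since a run of $\textbf{x}$ may straddle the boundary between $\textbf{u}$ and $\textbf{a}$; maximality of $\textbf{u}$ and $\textbf{v}$ must be invoked to guarantee that the deletions cannot be pushed outside the differing region without destroying the intersection.

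Finally, for (iii) I would establish the deletion–insertion duality $|B^{\rm D}(\textbf{x})\cap B^{\rm D}(\textbf{y})|=|B^{\rm I}(\textbf{x})\cap B^{\rm I}(\textbf{y})|$. The cleanest route is to run the same reduction on the insertion side: a common length-$(n+1)$ insertion supersequence exists exactly when $\textbf{y}$ is again a symbol-shift of $\textbf{x}$, so the identical Type-A/Type-B dichotomy controls $|B^{\rm I}(\textbf{x})\cap B^{\rm I}(\textbf{y})|$, with the same count giving $2$ for Type-A, $1$ for Type-B, and $0$ otherwise. Since both intersection sizes are thereby equal to the same function of the confusability type of $(\textbf{x},\textbf{y})$, they agree pairwise, which is exactly (iii); alternatively one may exhibit a direct length-preserving bijection between common subsequences and common supersequences of two equal-length words, a standard fact, and avoid repeating the analysis.
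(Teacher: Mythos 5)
Your setup is sound and goes further than the paper itself, which disposes of (i) and (ii) by citing Propositions 9 and 12 of Cai \emph{et al.} \cite{CKNY21} (noting only that balanced words have equal weight, so those propositions apply) and then deduces (iii) from (ii) together with the fact that $|B^{\rm D}(\textbf{x})\cap B^{\rm D}(\textbf{y})|=0$ if and only if $|B^{\rm I}(\textbf{x})\cap B^{\rm I}(\textbf{y})|=0$. Your disjointness observation, the weight argument forcing the two deleted symbols to agree, and the explicit descendants in the sufficiency directions are all correct. But your plan for the necessity directions---the actual content of the proposition---rests on a false structural claim. You assert that a single common descendant, chased through the interior of the differing blocks, ``forces $\textbf{a},\textbf{b}$ to be either fully alternating (Type-A) or a single shifted unary block (Type-B).'' What the chase actually yields is only that $\textbf{b}$ is a cyclic shift of $\textbf{a}$: one deletion must hit the first symbol of one block, the other the last symbol of the other block, whence $b_k=a_{k+1}$ for $k<\ell$ and $b_\ell=a_1$. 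This is strictly weaker than your dichotomy. Concretely, take $\textbf{x}=110100$ and $\textbf{y}=101001$ in $U_6$: here $\textbf{u}=1$, $\textbf{v}$ is empty, $\textbf{a}=10100$, $\textbf{b}=01001$, and $B^{\rm D}(\textbf{x})\cap B^{\rm D}(\textbf{y})=\{10100\}$ is nonempty, yet $\{\textbf{a},\textbf{b}\}$ is neither of the form $\{(10)^m,(01)^m\}$ nor a Type-B pair, under any choice of decomposition. So nonempty intersection does not force the dichotomy, and the counting step you build on top of it collapses; this is consistent with the proposition only because that example has intersection size $1$ and Hamming distance $4$, a case parts (i) and (ii) say nothing about.

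The repair is to use the stronger hypothesis of each part, which your sketch never does. For (ii), necessity must start from \emph{two distinct} common descendants: the two distinct merge patterns force both $b_k=a_{k+1}$ (for $k<\ell$) and $b_k=a_{k-1}$ (for $k>1$), hence $a_{k-1}=a_{k+1}$ with $a_1\neq a_2$, which is exactly the alternating Type-A pattern. For (i), necessity must invoke $d_{\rm H}(\textbf{x},\textbf{y})=2$: the shift relation $b_k=a_{k+1}$ combined with $a_k=b_k$ at all but the two differing positions forces the interior of the window to be constant, i.e.\ $\{\textbf{a},\textbf{b}\}=\{\sigma\bar{\sigma}^m,\bar{\sigma}^m\sigma\}$, which is Type-B when $m\geq 2$ and Type-A with $m=1$ (excluded since that case has intersection $2$). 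The fact that your necessity plan never uses the hypothesis $d_{\rm H}=2$ is itself a warning sign, given the counterexample above. Two smaller points: your ``exactly one'' count for Type-B pairs inspects only deletions inside the blocks, and the prefix/suffix issue you flag but defer is real---it is fixed either by a run argument showing any deletion outside the window is equivalent to one at its boundary, or by first proving (ii) and noting that a pair cannot be both Type-A- and Type-B-confusable. Finally, your bijection route to (iii) is legitimate but the ``standard fact'' needs a proof or citation; the paper's derivation of (iii) from (ii) plus the zero-iff-zero observation avoids this entirely.
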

\begin{proof}
Since the words $\textbf{x}$ and $\textbf{y}$ belong to $U_n$ with the same Hamming weight $\frac{n}{2}$, parts (i) and (ii) are true according to \cite[Propositions 9 and 12]{CKNY21}. The rest case (iii) then follows from part (ii) and the fact that
$|B^{\rm D}(\textbf{x})\cap B^{\rm D}(\textbf{y})|=0$ if and only if $|B^{\rm I}(\textbf{x})\cap B^{\rm I}(\textbf{y})|=0.$
\end{proof}

A corollary of the above result is immediate.

\begin{coro}\label{DIcondition}
Let $\textbf{x}$ and $\textbf{y}$ be two distinct words in $U_n$. Then $|B^{\rm DI}(\textbf{x})\cap B^{\rm DI}(\textbf{y})|\in\{0,2,4\}$. In particular, $|B^{\rm DI}(\textbf{x})\cap B^{\rm DI}(\textbf{y})|=4$ if and only if $\textbf{x}$ and $\textbf{y}$ are Type-A-confusable. Moreover, we have $\rho_{b}(n,N;B_2^{\rm DI})=\Delta$ for $N\geq 5$.
\end{coro}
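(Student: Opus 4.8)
The plan is to reduce the four-way intersection to the deletion and insertion balls separately, exploiting that these balls live in spaces of different lengths.

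First I would observe that every word in $B^{\rm D}(\cdot)$ has length $n-1$ while every word in $B^{\rm I}(\cdot)$ has length $n+1$ (a balanced word has the fixed length $n$, and the worked example confirms that $B^{\rm D}$ and $B^{\rm I}$ consist only of the shortened, resp.\ lengthened, words). Consequently the cross terms $B^{\rm D}(\textbf{x})\cap B^{\rm I}(\textbf{y})$ and $B^{\rm I}(\textbf{x})\cap B^{\rm D}(\textbf{y})$ are empty, so
$$B^{\rm DI}(\textbf{x})\cap B^{\rm DI}(\textbf{y}) = \big(B^{\rm D}(\textbf{x})\cap B^{\rm D}(\textbf{y})\big)\,\sqcup\,\big(B^{\rm I}(\textbf{x})\cap B^{\rm I}(\textbf{y})\big)$$
is a \emph{disjoint} union, whence $|B^{\rm DI}(\textbf{x})\cap B^{\rm DI}(\textbf{y})|=|B^{\rm D}(\textbf{x})\cap B^{\rm D}(\textbf{y})|+|B^{\rm I}(\textbf{x})\cap B^{\rm I}(\textbf{y})|$.

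Next I would invoke Proposition \ref{D-Icondition}. Part (iii) gives $|B^{\rm D}(\textbf{x})\cap B^{\rm D}(\textbf{y})|=|B^{\rm I}(\textbf{x})\cap B^{\rm I}(\textbf{y})|$, so the $B^{\rm DI}$-intersection is exactly twice the deletion-ball intersection. Parts (i) and (ii) show $|B^{\rm D}(\textbf{x})\cap B^{\rm D}(\textbf{y})|\in\{0,1,2\}$, equalling $2$ precisely when $\textbf{x},\textbf{y}$ are Type-A-confusable and $1$ precisely when $d_{\rm H}(\textbf{x},\textbf{y})=2$ and they are Type-B-confusable. Doubling yields $|B^{\rm DI}(\textbf{x})\cap B^{\rm DI}(\textbf{y})|\in\{0,2,4\}$, with the value $4$ attained exactly in the Type-A-confusable case, which settles the first two assertions.

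For the redundancy claim I would note that the above bounds $|B^{\rm DI}(\textbf{x})\cap B^{\rm DI}(\textbf{y})|\le 4$ for every pair of distinct balanced words, so $\nu(\mathcal{C};B^{\rm DI})\le 4<N$ holds for \emph{every} $\mathcal{C}\subseteq U_n$ as soon as $N\ge 5$. The constraint $\nu(\mathcal{C};B^{\rm DI})<N$ is then vacuous, the full set $U_n$ is an admissible code of maximum size, and $\rho_{b}(n,N;B^{\rm DI})=n-\log_2\binom{n}{n/2}=\Delta$. As for difficulty, the argument is essentially immediate once Proposition \ref{D-Icondition} is available; the only point requiring care is the length bookkeeping that forces the cross terms to vanish, since this is exactly what lets the intersection split as a disjoint union and avoids a genuine case analysis of the confusable patterns.
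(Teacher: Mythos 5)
Your proposal is correct and is essentially the argument the paper has in mind: the paper offers no proof, calling the corollary ``immediate'' from Proposition \ref{D-Icondition}, and the natural way it is immediate is exactly your route (the cross terms vanish by length, so the two single-type intersections are disjoint; they are equal in size by part (iii); and part (ii) pins down exactly when the doubled value reaches $4$, while the vacuousness of the constraint for $N\geq 5$ gives the redundancy claim). One small caveat: your unused aside that $|B^{\rm D}(\textbf{x})\cap B^{\rm D}(\textbf{y})|=1$ holds ``precisely when'' $d_{\rm H}(\textbf{x},\textbf{y})=2$ and the words are Type-B-confusable overstates part (i), which characterizes size-$1$ intersections only under the hypothesis $d_{\rm H}(\textbf{x},\textbf{y})=2$; size-$1$ intersections can also occur with $d_{\rm H}(\textbf{x},\textbf{y})=4$ (e.g., $\textbf{x}=\textbf{u}\,10110\,\textbf{v}$ and $\textbf{y}=\textbf{u}\,01101\,\textbf{v}$, whose deletion balls share $\textbf{u}\,0110\,\textbf{v}$ but which are neither Type-A- nor Type-B-confusable), but since your proof only uses the bound $\leq 2$ and the ``$=2$ iff Type-A'' equivalence, nothing is affected.
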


Combining Lemma \ref{Scondition}, Proposition \ref{D-Icondition} and Corollary \ref{DIcondition}, we have the following two propositions for the intersection size of the error-balls which involve substitutions. The proof is straightforward and thus omitted.

\begin{prop}\label{SDIcondition}
Let $B_2\in \{B^{\rm SD},B^{\rm SI}\}$, and let $\textbf{x}, \textbf{y}$ be two distinct words in $U_n$.
\begin{enumerate}
    \item[\rm (i)] If $d_{\rm H}(\textbf{x},\textbf{y})=2$, then $|B_2(\textbf{x})\cap B_2(\textbf{y})|\in\{2,3,4\}$. In particular, $|B_2(\textbf{x})\cap B_2(\textbf{y})|=4$ if and only if $\textbf{x}$ and $\textbf{y}$ are Type-A-confusable with $m=1$; and $|B_2(\textbf{x})\cap B_2(\textbf{y})|=3$ if and only if $\textbf{x}$ and $\textbf{y}$ are Type-B-confusable.
    \item[\rm (ii)] If $d_{\rm H}(\textbf{x},\textbf{y})\geq 4$, then $|B_2(\textbf{x})\cap B_2(\textbf{y})|\leq 2$. In particular, $|B_2(\textbf{x})\cap B_2(\textbf{y})|=2$ if and only if $\textbf{x}$ and $\textbf{y}$ are Type-A-confusable with $m\geq 2$.
    \item[\rm (iii)] $|B^{\rm SD}(\textbf{x})\cap B^{\rm SD}(\textbf{y})|=|B^{\rm SI}(\textbf{x})\cap B^{\rm SI}(\textbf{y})|$.
\end{enumerate}
Moreover, we have $\rho_{b}(n,N;B_2)=\Delta$ for $N\geq 5$.
\end{prop}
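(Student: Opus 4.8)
The plan is to exploit a length separation: a substitution keeps the length at $n$, whereas $B^{\rm D}(\textbf{x})\subseteq\F_2^{n-1}$ and $B^{\rm I}(\textbf{x})\subseteq\F_2^{n+1}$. Thus for any $\textbf{x},\textbf{y}$ the cross intersections $B^{\rm S}(\textbf{x})\cap B^{\rm D}(\textbf{y})$, $B^{\rm D}(\textbf{x})\cap B^{\rm S}(\textbf{y})$ and their insertion analogues are empty, so the defining unions split as disjoint unions and
\[
|B^{\rm SD}(\textbf{x})\cap B^{\rm SD}(\textbf{y})|=|B^{\rm S}(\textbf{x})\cap B^{\rm S}(\textbf{y})|+|B^{\rm D}(\textbf{x})\cap B^{\rm D}(\textbf{y})|,
\]
with the same identity for $B^{\rm SI}$ upon replacing $\rm D$ by $\rm I$. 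This reduces the whole proposition to adding the two counts already supplied by Lemma \ref{Scondition} and Proposition \ref{D-Icondition}. In particular (iii) is immediate: the common substitution term cancels and Proposition \ref{D-Icondition}(iii) gives $|B^{\rm D}(\textbf{x})\cap B^{\rm D}(\textbf{y})|=|B^{\rm I}(\textbf{x})\cap B^{\rm I}(\textbf{y})|$.

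Because $\textbf{x},\textbf{y}\in U_n$ share the weight $n/2$, the distance $d_{\rm H}(\textbf{x},\textbf{y})$ is even, so the two cases $d_{\rm H}=2$ and $d_{\rm H}\geq 4$ are exhaustive. For (i), when $d_{\rm H}=2$ Lemma \ref{Scondition} pins the substitution term at $2$; Proposition \ref{D-Icondition} then makes the deletion term equal to $2$ for a Type-A-confusable pair (necessarily $m=1$, since a Type-A pair with parameter $m$ has $d_{\rm H}=2m$), equal to $1$ for a Type-B-confusable pair, and equal to $0$ otherwise. Summing yields the totals $4,3,2$ and exactly the claimed equivalences.

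For (ii), when $d_{\rm H}\geq 4$ the substitution term vanishes, so the total equals the deletion term, which is at most $2$; it is $2$ precisely for a Type-A-confusable pair, here forced to satisfy $m\geq 2$ since $d_{\rm H}=2m\geq 4$. The one step requiring care---and the only genuine obstacle---is to rule out the value $1$ in this regime, so that the total lies in $\{0,2\}$ rather than $\{0,1,2\}$. This is settled by observing that each Type-B template $\{01^m,1^m0\}$ and $\{10^m,0^m1\}$ differs only in its first and last coordinate, whence every Type-B-confusable pair has $d_{\rm H}=2$; therefore the value-$1$ case of Proposition \ref{D-Icondition} cannot occur once $d_{\rm H}\geq 4$.

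Finally, for the displayed consequence I would read $\nu(U_n;B_2)=4$ off parts (i) and (ii): the maximum $4$ is attained, e.g. by two balanced words agreeing outside a leading block $10\leftrightarrow 01$ (a Type-A-confusable pair with $m=1$), and no pair exceeds it. Since every balanced code $\mathcal{C}\subseteq U_n$ has redundancy $n-\log_2|\mathcal{C}|\geq n-\log_2\binom{n}{n/2}=\Delta$, while $U_n$ itself meets this bound and satisfies $\nu(U_n;B_2)<N$ exactly when $N>4$, the choice $\mathcal{C}=U_n$ is optimal and gives $\rho_b(n,N;B_2)=\Delta$ for all $N\geq 5$.
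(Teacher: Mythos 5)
Your overall strategy is exactly the one the paper intends (the paper omits the proof, remarking that it follows by combining Lemma \ref{Scondition} with Proposition \ref{D-Icondition}): since $B^{\rm S}(\textbf{x})\subseteq\F_2^n$, $B^{\rm D}(\textbf{x})\subseteq\F_2^{n-1}$ and $B^{\rm I}(\textbf{x})\subseteq\F_2^{n+1}$, the intersections split by length and the counts add. Your treatment of part (i), part (iii), the parity remark that $d_{\rm H}$ is even on $U_n$, and the final redundancy claim (maximum intersection $4$ attained by a Type-A pair with $m=1$, so $U_n$ itself is optimal for $N\geq 5$) are all correct.

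However, the step you flag as ``the only genuine obstacle'' in part (ii) --- ruling out the value $1$ so that the total lies in $\{0,2\}$ when $d_{\rm H}(\textbf{x},\textbf{y})\geq 4$ --- is both unnecessary and false. Your argument misreads Proposition \ref{D-Icondition}(i): the equivalence ``intersection $=1$ iff Type-B-confusable'' is asserted there only under the hypothesis $d_{\rm H}(\textbf{x},\textbf{y})=2$; it is not a global characterization of when the deletion balls meet in exactly one word, so you cannot infer Type-B-confusability (and hence $d_{\rm H}=2$) from an intersection of size $1$. Indeed the value $1$ does occur at distance $4$: take $\textbf{x}=100101$ and $\textbf{y}=001011$ in $U_6$, so $d_{\rm H}(\textbf{x},\textbf{y})=4$; deleting the first bit of $\textbf{x}$ and the last bit of $\textbf{y}$ both yield $00101$, so $|B^{\rm D}(\textbf{x})\cap B^{\rm D}(\textbf{y})|\geq 1$, while the pair is not Type-A-confusable, so Proposition \ref{D-Icondition}(ii) forbids the value $2$; hence $|B^{\rm D}(\textbf{x})\cap B^{\rm D}(\textbf{y})|=1$ and $|B^{\rm SD}(\textbf{x})\cap B^{\rm SD}(\textbf{y})|=1$. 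Fortunately the proposition never claims that $1$ is impossible --- part (ii) only asserts the bound $\leq 2$ and characterizes equality --- so your proof is repaired by simply deleting this step: with $d_{\rm H}\geq 4$ the substitution term vanishes, the deletion term is at most $2$, and it equals $2$ iff the pair is Type-A-confusable, which forces $m\geq 2$. Everything else stands.
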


\begin{prop}\label{Econdition}
Let $\textbf{x}$ and $\textbf{y}$ be two distinct words in $U_n$. Then we have $|B^{\rm edit}(\textbf{x})\cap B^{\rm edit}(\textbf{y})|\in\{0,2,4,6\}$ and the following hold.
\begin{enumerate}
    \item[\rm (i)] If $d_{\rm H}(\textbf{x},\textbf{y})=2$, then $|B^{\rm edit}(\textbf{x})\cap B^{\rm edit}(\textbf{y})|\in \{2,4,6\}$. In particular, $|B^{\rm edit}(\textbf{x})\cap B^{\rm edit}(\textbf{y})|=6$ if and only if $\textbf{x}$ and $\textbf{y}$ are Type-A-confusable with $m=1$; and $|B^{\rm edit}(\textbf{x})\cap B^{\rm edit}(\textbf{y})|=4$ if and only if $\textbf{x}$ and $\textbf{y}$ are Type-B-confusable.
    \item[\rm (ii)] If $d_{\rm H}(\textbf{x},\textbf{y})\geq 4$, then $|B^{\rm edit}(\textbf{x})\cap B^{\rm edit}(\textbf{y})|\in \{0,2,4\}$. In particular, $|B^{\rm edit}(\textbf{x})\cap B^{\rm edit}(\textbf{y})|=4$ if and only if $\textbf{x}$ and $\textbf{y}$ are Type-A-confusable with $m\geq 2$.
\end{enumerate}
Moreover, we have $\rho_{b}(n,N;B_2^{\rm edit})=\Delta$ for $N\geq 7$.
\end{prop}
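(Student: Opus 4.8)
The plan is to establish Proposition~\ref{Econdition} by systematically combining the three constituent substitution, deletion, and insertion error-balls, building directly on the intersection-size characterizations already proven for the simpler channels. Since $B^{\rm edit}(\textbf{x})=B^{\rm S}(\textbf{x})\cup B^{\rm D}(\textbf{x})\cup B^{\rm I}(\textbf{x})$, and the three component balls live in different ambient spaces by length (substitutions preserve length $n$, deletions yield length $n-1$, insertions yield length $n+1$), the key structural observation is that words in $B^{\rm S}(\textbf{x})\cap B^{\rm S}(\textbf{y})$, $B^{\rm D}(\textbf{x})\cap B^{\rm D}(\textbf{y})$, and $B^{\rm I}(\textbf{x})\cap B^{\rm I}(\textbf{y})$ can never coincide. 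Hence the intersection decomposes as a disjoint union:
\begin{equation*}
|B^{\rm edit}(\textbf{x})\cap B^{\rm edit}(\textbf{y})|=|B^{\rm S}(\textbf{x})\cap B^{\rm S}(\textbf{y})|+|B^{\rm D}(\textbf{x})\cap B^{\rm D}(\textbf{y})|+|B^{\rm I}(\textbf{x})\cap B^{\rm I}(\textbf{y})|.
\end{equation*}
First I would verify this disjointness claim carefully, noting that the length-separation argument immediately rules out any cross-contamination between the three pieces.

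Next I would assemble the total by invoking the earlier results term by term. By Lemma~\ref{Scondition} the substitution term equals $2$ when $d_{\rm H}(\textbf{x},\textbf{y})=2$ and $0$ when $d_{\rm H}(\textbf{x},\textbf{y})\geq 4$. By Proposition~\ref{D-Icondition}(iii) the deletion and insertion terms are equal, so I can write the combined contribution of the two as twice a single value, where that value is governed by parts~(i) and~(ii) of Proposition~\ref{D-Icondition}: it is $2$ precisely when $\textbf{x},\textbf{y}$ are Type-A-confusable, $1$ when they are Type-B-confusable, and $0$ otherwise. In the $d_{\rm H}=2$ regime I would enumerate the three confusability subcases: Type-A-confusable with $m=1$ gives $2+2(2)=6$; Type-B-confusable gives $2+2(1)=4$; and all remaining pairs at distance $2$ give $2+0=2$, since no pair at Hamming distance $2$ can be Type-A-confusable with $m\geq 2$ (that would force a larger Hamming distance). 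This yields part~(i). For $d_{\rm H}\geq 4$ the substitution term vanishes, so the total is $2$ times the deletion/insertion value: Type-A-confusable with $m\geq 2$ gives $0+2(2)=4$, and otherwise (Type-B-confusability being incompatible with $d_{\rm H}\geq 4$) the total is either $0$ or $2(1)$—here I must confirm that at distance $\geq 4$ the only nonzero possibilities are the Type-A case and a residual case, giving values in $\{0,2,4\}$, which yields part~(ii).

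The global claim that $|B^{\rm edit}(\textbf{x})\cap B^{\rm edit}(\textbf{y})|\in\{0,2,4,6\}$ then follows from the two cases, and the redundancy statement $\rho_b(n,N;B^{\rm edit})=\Delta$ for $N\geq 7$ follows because the maximum intersection size is $6$, so requiring $\nu(\mathcal{C};B^{\rm edit})<N$ is automatically satisfied by the full balanced set $U_n$ once $N\geq 7$, giving redundancy exactly $\Delta$. The main obstacle I anticipate is the bookkeeping around which confusability types are compatible with which Hamming distance: I need the fact that Type-A-confusability with parameter $m$ forces $d_{\rm H}(\textbf{x},\textbf{y})=2m$ while Type-B-confusability forces $d_{\rm H}(\textbf{x},\textbf{y})=2$, so that the case split by Hamming distance aligns cleanly with the case split by confusability type. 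Establishing these distance constraints from Definition~\ref{confusableAB} is the one genuinely non-mechanical step; once it is in hand, the rest is an additive tabulation using the already-proven propositions, which is exactly why the authors remark that the proof is straightforward and omit it.
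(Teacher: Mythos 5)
Your proposal is correct and matches the paper's (omitted) argument exactly: the paper asserts that the proposition follows by "combining Lemma~\ref{Scondition}, Proposition~\ref{D-Icondition} and Corollary~\ref{DIcondition}," which is precisely your length-based disjoint decomposition plus term-by-term tabulation, including the key observations that Type-A-confusability with parameter $m$ forces $d_{\rm H}=2m$ and Type-B forces $d_{\rm H}=2$. Your self-correction in part~(ii) is also the right instinct, since Proposition~\ref{D-Icondition}(i) characterizes intersection size $1$ only at distance $2$, and a deletion-ball intersection of size $1$ can indeed occur at $d_{\rm H}\geq 4$ without either confusability type, contributing total $2$, which your final case analysis handles correctly.
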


The following result is an analogy to \cite[Theorem 23]{CKNY21}, which presents the lower bounds for the redundancy of the code under certain conditions with respect to the notion of confusability. The proof is similar to the non-restricted case \cite{CKY20} and thus omitted.

\begin{prop}\label{lowerbound}
Let $\mathcal{C}\subseteq U_n$. Then the following hold.
\begin{enumerate}
    \item[\rm (i)] If every pair of distinct words in $\mathcal{C}$ are not Type-A-confusable, then the redundancy of $\mathcal{C}$ is at least $\frac{1}{2}\log_2n+\log_2\log_2n-O(1)$.
    \item[\rm (ii)] If every pair of distinct words in $\mathcal{C}$ are not Type-B-confusable, then the redundancy of $\mathcal{C}$ is at least $\frac{1}{2}\log_2n+\log_2\log_2n-O(1)$.
    \item[\rm (iii)] If every pair of distinct words in $\mathcal{C}$ are not Type-B-confusable with $m=1$, then the redundancy of $\mathcal{C}$ is at least $\Delta+1-o(1)$.
\end{enumerate}
\end{prop}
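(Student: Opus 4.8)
Here is how I would approach Proposition~\ref{lowerbound}. Throughout I would derive each lower bound from an upper bound on $|\mathcal{C}|$, since the redundancy is $n-\log_2|\mathcal{C}|$ and $\log_2|U_n|=n-\Delta$ by Lemma~\ref{binomial}. In all three cases the idea is to partition (most of) $U_n$ into disjoint \emph{clusters}, each a family of balanced words sharing a fixed ``context'' and differing only inside one short window, and then to use the hypothesis to bound $|\mathcal{C}\cap(\text{cluster})|$.

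Part (iii) is the model case, so I would treat it first. Here ``Type-B-confusable with $m=1$'' is exactly the relation of swapping one adjacent unequal pair $10\leftrightarrow01$. Group the coordinates into the $n/2$ fixed two-blocks $(2j-1,2j)$ and let $\psi\colon U_n\to U_n$ swap the bits of the leftmost two-block whose entries are unequal. Flipping an unequal two-block leaves it unequal and alters no earlier block, so $\psi$ is an involution whose non-trivial orbits $\{\mathbf{x},\psi(\mathbf{x})\}$ are Type-B-$m{=}1$-confusable pairs, while its fixed points are the words all of whose two-blocks lie in $\{00,11\}$, of which there are $\binom{n/2}{n/4}=O(2^{n/2}/\sqrt n)=o(|U_n|)$. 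Since $\mathcal{C}$ meets each $2$-orbit in at most one word, $|\mathcal{C}|\le\frac12\bigl(|U_n|+\binom{n/2}{n/4}\bigr)=\frac12|U_n|(1+o(1))$, giving redundancy $\ge\Delta+1-o(1)$.

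For (i) and (ii) the clusters must have size $\Theta(\log n)$, since the gain is logarithmic in their size. I would fix $m=\Theta(\log n)$, cut $[n]$ into consecutive blocks of length $L$ chosen so that $2^{L}=\Theta(\sqrt n)$, and use the staircase families from Definition~\ref{confusableAB}. For (i), with $L=2m$, the words $W_k:=(10)^{m-k}(01)^k$ $(0\le k\le m)$ placed on one block over a fixed context form a Type-A clique, because $W_j,W_k$ differ only in a central $\{(10)^{k-j},(01)^{k-j}\}$; hence $\mathcal{C}$ meets each such cluster in at most $1$ word. For (ii), with $L=m+1$, the run-words $V_k:=1^{m-k}01^k$ satisfy: $V_j,V_k$ are Type-B-confusable iff $|j-k|\ge2$ (consecutive ones differ only by $\{10,01\}$, which is Type-A), so the largest Type-B-independent subset of $\{V_0,\dots,V_m\}$ has size $2$, and $\mathcal{C}$ meets each cluster in at most $2$ words. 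Writing $d(\mathbf{w})$ for the number of blocks on which $\mathbf{w}$ already displays some $W_k$ (resp.\ $V_k$) and double-counting the incidences between $\mathcal{C}$ and the clusters gives $\sum_{\mathbf{w}\in\mathcal{C}}d(\mathbf{w})\le c\,|\mathcal{K}|$ with $c\in\{1,2\}$ and $|\mathcal{K}|=\lfloor n/L\rfloor\binom{n-L}{\,n/2-m\,}$ the number of clusters. Since every cluster has size $m+1$, the mean $\mu:=\mathbb{E}\,d=(m+1)|\mathcal{K}|/|U_n|$, and splitting $\mathcal{C}$ according to whether $d(\mathbf{w})\ge\mu/2$ shows the high part has size $\le 2c\,|U_n|/(m+1)=O(|U_n|/\log n)$.

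The remaining — and main — point is that the low part is negligible, i.e.\ that almost every balanced word meets many qualifying blocks. Using $\binom{n-2m}{n/2-m}=\binom{n-2m}{(n-2m)/2}$ and Lemma~\ref{binomial} one finds $\mu=\Theta(\sqrt n)$ in both cases. The hard part is the concentration: I would bound $\mathrm{Var}(d)=O(\mu)$ via the standard constant-weight estimate $\binom{n-4m}{n/2-2m}\binom{n}{n/2}\big/\binom{n-2m}{n/2-m}^2=1+O(m^2/n)$ for the pairwise block covariances on the slice $U_n$ (so that $\sum_{b\neq b'}|\mathrm{Cov}(X_b,X_{b'})|=\mu^2\,O(m^2/n)=O((\log n)^2)=o(\mu)$), whence Chebyshev gives $|\{\,d<\mu/2\,\}|=O(|U_n|/\sqrt n)=o(|U_n|/\log n)$. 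Combining the two parts yields $|\mathcal{C}|=O(|U_n|/\log n)$, so the redundancy is at least $\Delta+\log_2\log_2 n-O(1)=\frac12\log_2 n+\log_2\log_2 n-O(1)$, as claimed. The only genuine obstacle is this tail bound for $d$ under the balanced distribution; once the staircase families are in hand, the clique structure and the counting are routine.
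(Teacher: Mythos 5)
Your proposal is sound, and it is worth noting that the paper itself contains no proof of Proposition~\ref{lowerbound}: it only remarks that the argument is ``similar to the non-restricted case \cite{CKY20} and thus omitted'' (cf.\ \cite[Theorem~23]{CKNY21}). The argument implied by those references builds \emph{disjoint} cliques by locating, in each word, the leftmost maximal alternating segment (Type-A) or run (Type-B) of length $\Theta(\log_2 n)$, and separately bounds the number of exceptional words containing no such segment; over $\F_2^n$ that count is immediate because disjoint windows are independent, but on the balanced slice $U_n$ this independence fails, and that is exactly the step the paper glosses over. Your route handles the slice head-on and differently: your clusters (fixed windows with fixed contexts) overlap, so you replace the clique partition by an incidence double count, and you control the words with few qualifying windows by a second-moment/Chebyshev bound for $d$ under the hypergeometric slice measure. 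Your staircase families $(10)^{m-k}(01)^k$ and $1^{m-k}01^k$ are correct: any two of the former are Type-A-confusable with parameter $k-j$, and two of the latter are Type-B-confusable exactly when $|j-k|\ge 2$, so the per-cluster bounds $1$ and $2$ hold; the covariance estimate is also of the right order (in case (i) it is in fact $1+O(1/n)$; the $1+O(m^2/n)$ form is what is needed in case (ii), where the window content has weight $m$ in length $m+1$ and the binomials are off-center), giving ${\rm Var}(d)=O(\mu)$ with $\mu=\Theta(\sqrt n)$ and hence a negligible low part. The involution in (iii) is a perfect matching into Type-B-($m{=}1$) pairs outside the at most $\binom{n/2}{n/4}=O(2^{n/2})$ fixed words (when $n\equiv 2\pmod 4$ there are none at all), yielding $|\mathcal{C}|\le |U_n|/2+O(2^{n/2})$ and redundancy at least $\Delta+1-o(1)$. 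In short, your proof is correct modulo the routine binomial estimates you explicitly identify; what it buys over the (omitted) reference approach is a self-contained treatment of the dependence inherent in $U_n$, at the cost of a variance computation that the disjoint-clique argument avoids.
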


\section{Reconstruction codes with error-balls $B^{\rm D}$, $B^{\rm I}$ and $B^{\rm DI}$}\label{Sec:3}

In this section, we determine the  optimal redundancy of balanced reconstruction codes for the error-balls $B^{\rm D}$, $B^{\rm I}$ and $B^{\rm DI}$. We first consider the case $N=1$.

\subsection{The case $N=1$}

It is known that any code can correct $s$ deletions if and only if it can correct $s$ insertions \cite{L65}. Thus, we only consider one of the error-balls $B^{\rm D}$ or $B^{\rm I}$ in the rest of this subsection.  Let $A^{\rm D}(n)$ denote the maximum size of a binary balanced single-deletion correcting code of length $n$. Then it suffices to estimate the value of $A^{\rm D}(n)$.

Since the Varshamov-Tenengolts (VT) codes are the best known binary codes that can correct a single deletion \cite{S00}, we define a balanced Varshamov-Tenengolts (BVT for short) code for our purpose.


\begin{defe}\label{restrictedVT}
{\rm(BVT code)} For any $0\leq a \leq n$, the balanced Varshamov-Tenengolts code $BVT_a(n)$ is defined as follows:
$$BVT_a(n)=\left\{(x_1,x_2,\ldots,x_n)\in U_n:\sum_{i=1}^nix_i \equiv a\pmod {n+1}\right\}.$$
\end{defe}
Obviously, the set $BVT_a(n)$ is a binary balanced single-deletion correcting code for any $0\leq a \leq n$. Hence, $A^{\rm D}(n)\geq \binom{n}{n/2}\big/(n+1)$ trivially.
It should be noted that, several modifications of the VT-code have previously been
proposed for different purposes, see e.g. \cite{BM19,SWGY17}.


Next, we seek an upper bound on the value of $A^{\rm D}(n)$.  We need a few preliminary results on hypergraphs, which are mainly from \cite{B89} and \cite{KK13}.
Let $X$ be a finite set. A hypergraph $\mathcal{H}=(X,H)$ on $X$ is a family  $H$ of nonempty subsets of $X$, where elements of $X$ are called vertices, and elements of $H$ are called hyperedges. 
A matching of a hypergraph is a collection of pairwise disjoint hyperedges, and the matching number of $\mathcal{H}$, denoted by $\nu(\mathcal{H})$, is the largest number of edges in a matching of $\mathcal{H}$.
A transversal of a hypergraph $\mathcal{H}=(X,H)$ is a subset $T\subset X$ that intersects every hyperedge in $H$, and the transversal number of $\mathcal{H}$, denoted by $\tau(\mathcal{H})$, is the smallest size of a transversal. Suppose that $\mathcal{H}$ has $n$ vertices and $m$ edges, let $A_{n \times m}$ be the incidence matrix of $\mathcal{H}$. Kulkarni \emph{et al}. \cite{KK13} proved that the matching number and transversal number of a hypergraph $\mathcal{H}$ are solutions of the following integer linear programming problems:
$$\nu(\mathcal{H})=\max\bigg\{\sum_{i=1}^mz_i:A\textbf{z}\leq \textbf{1}_n\bigg\}\ {\rm and} \ \ \tau(\mathcal{H})=\min\bigg\{\sum_{i=1}^nw_i:A^T\textbf{w}\geq  \textbf{1}_m\bigg\},$$
where $\textbf{z}=(z_1,z_2,\ldots,z_m)^T\in \Z_{\geq 0}^m$, $\textbf{w}=(w_1,w_2,\ldots,w_n)^T\in \Z_{\geq 0}^n$, $\textbf{1}_n$ is the all one column vector, and the inequality means the components-wise inequality. In particular, $\nu(\mathcal{H})\leq \tau(\mathcal{H})$.
If we relax the choice of $z_i$ and $w_i$ to any nonnegative reals in the above programming problems, we obtain the definitions of fractional matching number and fractional transversal number of $\mathcal{H}$, denoted by $\nu^*(\mathcal{H})$ and $\tau^*(\mathcal{H})$, respectively.


We will apply the following lemma to give an upper bound of $A^{\rm D}(n)$.

\begin{lem}\cite{KK13}\label{fractional}
For any hypergraph $\mathcal{H}$, we have
$$\nu(\mathcal{H})\leq \nu^*(\mathcal{H})=\tau^*(\mathcal{H}) \leq \tau(\mathcal{H}).$$
\end{lem}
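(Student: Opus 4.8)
The plan is to recognize the four quantities $\nu(\mathcal{H})$, $\nu^*(\mathcal{H})$, $\tau^*(\mathcal{H})$, $\tau(\mathcal{H})$ as optimal values of two linear programs together with their integral refinements, and then to obtain the displayed chain from two elementary principles: relaxing an integrality constraint enlarges the feasible region, and the strong duality theorem of linear programming. First I would record the four optimization problems explicitly in terms of the incidence matrix $A$ of $\mathcal{H}$. The matching value is the optimum of $\max\{\sum_i z_i : A\textbf{z}\leq\textbf{1}_n,\ \textbf{z}\geq \textbf{0}\}$ and the transversal value is the optimum of $\min\{\sum_i w_i: A^T\textbf{w}\geq\textbf{1}_m,\ \textbf{w}\geq \textbf{0}\}$, where for $\nu(\mathcal{H})$ and $\tau(\mathcal{H})$ the vectors range over $\Z_{\geq 0}^m$ and $\Z_{\geq 0}^n$, while for $\nu^*(\mathcal{H})$ and $\tau^*(\mathcal{H})$ they range over the nonnegative reals.

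With these in hand, the two outer inequalities are immediate from relaxation. Every integral matching vector $\textbf{z}\in\Z_{\geq 0}^m$ with $A\textbf{z}\leq\textbf{1}_n$ is, in particular, a nonnegative real feasible vector, so the maximum over the larger real feasible region is no smaller; this gives $\nu(\mathcal{H})\leq\nu^*(\mathcal{H})$. Dually, every integral transversal vector is a nonnegative real feasible vector, so the minimum over the larger region is no larger, giving $\tau^*(\mathcal{H})\leq\tau(\mathcal{H})$.

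The central equality $\nu^*(\mathcal{H})=\tau^*(\mathcal{H})$ is exactly strong duality applied to the primal--dual pair above: the fractional transversal LP is the linear-programming dual of the fractional matching LP, with the cost vector $\textbf{1}_m$ and the right-hand side $\textbf{1}_n$ exchanging roles and $A$ transposing to $A^T$. To invoke strong duality I only need both programs to be feasible. The matching LP is feasible since $\textbf{z}=\textbf{0}$ satisfies $A\textbf{0}=\textbf{0}\leq\textbf{1}_n$, and the transversal LP is feasible since $\textbf{w}=\textbf{1}_n$ satisfies $A^T\textbf{1}_n\geq\textbf{1}_m$, because the $j$-th coordinate of $A^T\textbf{1}_n$ is the cardinality of the $j$-th hyperedge, which is at least $1$ as hyperedges are nonempty by definition. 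Hence both optima are finite and equal.

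I expect no genuine obstacle here: the entire content is the strong duality theorem, which I would cite as a black box (or, if a self-contained treatment were wanted, derive from Farkas' lemma). The only point meriting a line of care is the feasibility of the dual program, and this reduces precisely to the standing assumption that every hyperedge of $\mathcal{H}$ is nonempty.
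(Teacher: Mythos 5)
Your proof is correct, and it is essentially the argument the paper is relying on: the paper gives no proof of this lemma at all---it is quoted directly from \cite{KK13}, with the integer-programming formulations of $\nu(\mathcal{H})$ and $\tau(\mathcal{H})$ recorded just beforehand---and your relaxation-plus-strong-duality argument is precisely the standard proof that this citation stands in for. The only delicate point, feasibility of the dual (transversal) program, you handle correctly by invoking the standing assumption that every hyperedge is nonempty, and weak duality then supplies the finiteness of both optima needed for the strong duality theorem.
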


Let $V_{n-1}$ be the subset of $\F_2^{n-1}$ consisting of all words with Hamming weights $\frac{n}{2}$ or $\frac{n}{2}-1$. Then $|V_{n-1}|=\binom{n-1}{n/2}+\binom{n-1}{n/2-1}=\binom{n}{n/2}$. Consider the following hypergraph:
$$\mathcal{H}_{n}^{\rm D}=\big(V_{n-1},\{B^{\rm D}(\textbf{x}):\textbf{x}\in U_n\}\big).$$
In $\mathcal{H}_{n}^{\rm D}$,  the vertices are words in $V_{n-1}$, and the hyperedges are single-deletion balls of words in $U_n$.
Then the value of $A^{\rm D}(n)$ is equal to the matching number $\nu(\mathcal{H}_{n}^{\rm D})$ of $\mathcal{H}_{n}^{\rm D}$.
By Lemma \ref{fractional}, we have $\nu(\mathcal{H}_{n}^{\rm D})\leq \tau^*(\mathcal{H}_{n}^{\rm D})$, where $\tau^*(\mathcal{H}_{n}^{\rm D})$ is the fractional transversal number of $\mathcal{H}_{n}^{\rm D}$. By definition,
$$\tau^*(\mathcal{H}_{n}^{\rm D})=\min\bigg\{\sum_{\textbf{x}\in V_{n-1}} w(\textbf{x}):\sum_{\textbf{x}\in B^{\rm D}(\textbf{y})}w(\textbf{x})\geq 1,\ \textbf{x}\in V_{n-1},\ \textbf{y}\in U_n\ {\rm and}\ w(\textbf{x})\geq 0\bigg\}.$$

Next, we will give an upper bound of $\tau^*(\mathcal{H}_{n}^{\rm D})$ by computing $\sum_{\textbf{x}\in V_{n-1}} w(\textbf{x})$ for the special function $w(\textbf{x})=\frac{1}{r(\textbf{x})}$, where  $r(\textbf{x})$ is the number of runs in $\textbf{x}$. Consequently, this will give an upper bound for  $A^{\rm D}(n)$. First, we need the following counting lemma.

\begin{lem}\label{Vruns}
The number of words in $U_n$ with exactly $i$ $(2 \leq i \leq n)$ runs is
$$2\cdot \binom{n/2-1}{\lceil i/2\rceil-1}\binom{n/2-1}{\lfloor i/2\rfloor-1}.$$
Further, the number of words in $V_{n-1}$ with exactly $i$ $(2 \leq i \leq n-1)$ runs is
$$2\cdot \bigg[\binom{n/2-1}{\lceil i/2\rceil-1}\binom{n/2-2}{\lfloor i/2\rfloor-1}+\binom{n/2-1}{\lfloor i/2\rfloor-1}\binom{n/2-2}{\lceil i/2\rceil-1}\bigg].$$
\end{lem}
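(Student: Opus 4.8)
The statement counts binary words of even length $n$ with weight $n/2$ (first part) and words of length $n-1$ with weight $n/2$ or $n/2-1$ (second part), refined by the number of runs. The plan is to reduce both counts to a standard "compositions" bijection: a binary word with $i$ runs is determined by (a) the value of its first symbol and (b) the sequence of run lengths $(\ell_1,\ldots,\ell_i)$ of positive integers, where consecutive runs alternate between symbols.

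First I would handle the $U_n$ case. Fix the first symbol; there are two choices, which accounts for the leading factor $2$. Suppose the word starts with a $1$. If it has $i$ runs, the runs at odd positions are blocks of $1$'s and the runs at even positions are blocks of $0$'s. The number of $1$-runs is $\lceil i/2\rceil$ and the number of $0$-runs is $\lfloor i/2\rfloor$, and their lengths must sum to $n/2$ and $n/2$ respectively (balanced weight). So I count compositions of $n/2$ into $\lceil i/2\rceil$ positive parts and compositions of $n/2$ into $\lfloor i/2\rfloor$ positive parts; by the standard stars-and-bars count these are $\binom{n/2-1}{\lceil i/2\rceil-1}$ and $\binom{n/2-1}{\lfloor i/2\rfloor-1}$. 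Multiplying and noting the "starts with $0$" case gives the identical product by symmetry (it swaps which runs count $1$'s versus $0$'s, but the product of the two binomials is unchanged), I obtain the factor-$2$ expression. I should be slightly careful to check that this symmetry genuinely reproduces the same product rather than merely a mirror case, but since $\binom{n/2-1}{\lceil i/2\rceil-1}\binom{n/2-1}{\lfloor i/2\rfloor-1}$ is symmetric under swapping $\lceil i/2\rceil$ and $\lfloor i/2\rfloor$, both starting symbols yield the same count, confirming the clean factor of $2$.

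For the $V_{n-1}$ part I would split according to the two possible weights. A word in $V_{n-1}$ has length $n-1$ and weight either $n/2$ (so $n/2-1$ zeros) or $n/2-1$ (so $n/2$ zeros). Fixing $i$ runs and again conditioning on the first symbol, each weight class becomes a product of two compositions exactly as above, but now with unequal totals. For weight $n/2$ and first symbol $1$, the $1$-runs number $\lceil i/2\rceil$ summing to $n/2$ and the $0$-runs number $\lfloor i/2\rfloor$ summing to $n/2-1$, giving $\binom{n/2-1}{\lceil i/2\rceil-1}\binom{n/2-2}{\lfloor i/2\rfloor-1}$; the other starting symbol and the other weight class produce the remaining three products. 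The expected main obstacle is the bookkeeping: there are four (starting symbol) $\times$ (weight) combinations, and I must verify that after collecting them they collapse precisely into the symmetric sum $2\big[\binom{n/2-1}{\lceil i/2\rceil-1}\binom{n/2-2}{\lfloor i/2\rfloor-1}+\binom{n/2-1}{\lfloor i/2\rfloor-1}\binom{n/2-2}{\lceil i/2\rceil-1}\big]$. Pairing "first symbol $1$, weight $n/2$" with "first symbol $0$, weight $n/2-1$" (these give equal products by the weight/symbol duality) supplies one factor of $2$ on the first term, and the complementary pairing supplies the factor of $2$ on the second term, yielding exactly the claimed formula. Finally I would note the stated ranges of $i$ are exactly those for which all the binomial upper indices are nonnegative, so no degenerate cases intrude.
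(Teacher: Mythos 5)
Your proof is correct and follows essentially the same approach as the paper: condition on the first symbol (yielding the factor of $2$), then count run-length sequences as compositions via stars and bars, with $\lceil i/2\rceil$ runs of the leading symbol and $\lfloor i/2\rfloor$ of the other. The only difference is that the paper proves the $U_n$ case in detail and dismisses $V_{n-1}$ with ``a similar argument works,'' whereas you carry out the four-case (starting symbol) $\times$ (weight) bookkeeping for $V_{n-1}$ explicitly, which is exactly the omitted argument.
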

\begin{proof}
We only prove the case for $U_n$, and a similar argument works for $V_{n-1}$.
Let $T=T_0\cup T_1$ be the subset in $U_n$ with exactly $i$ runs, where $T_i$ consists of words in $T$  with the first coordinate being $i$. Then it is easy to see that $T_0\cap T_1=\emptyset$ and $|T_0|=|T_1|$. Hence, we only need to calculate the value $|T_1|$. Let $\textbf{x}$ be a word in $T_1$, and then it is of the form:
$$\textbf{x}=(\underbrace{\textbf{1010}\ldots \textbf{a}}_i),$$
where each boldface symbol represents a run of length at least one. Moreover, $\textbf{a}=\textbf{1}$ if $i$ is odd, otherwise $\textbf{a}=\textbf{0}$. Then the size of $T_1$ equals the number of ways to distribute $\frac{n}{2}$ 1s into $\lceil\frac{i}{2}\rceil$ blocks, and distribute $\frac{n}{2}$ 0s into  $\lfloor\frac{i}{2}\rfloor$ blocks, respectively, such that each block is nonempty.
\end{proof}

\begin{thm}\label{deletion1}
Let $n\geq2$ be even. Then the maximum size of a balanced $(n,1;B^{\rm D})$-reconstruction code,  
\begin{eqnarray*}
  A^{\rm D}(n) &\leq& \frac{2\big(\binom{n}{n/2}-2\big)}{n-2}.
\end{eqnarray*}
Consequently, the optimal redundancy $\rho_{b}(n,N;B^{\rm D})$ is at least $ \frac{3}{2}\log_2n-O(1)$.
\end{thm}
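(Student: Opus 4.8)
The plan is to bound the matching number $\nu(\mathcal{H}_n^{\rm D}) = A^{\rm D}(n)$ from above by exhibiting a good fractional transversal, following the setup already laid out in the excerpt. By Lemma \ref{fractional}, any feasible weighting $w$ gives $A^{\rm D}(n) = \nu(\mathcal{H}_n^{\rm D}) \leq \tau^*(\mathcal{H}_n^{\rm D}) \leq \sum_{\textbf{x}\in V_{n-1}} w(\textbf{x})$. The suggested choice is $w(\textbf{x}) = 1/r(\textbf{x})$, where $r(\textbf{x})$ is the number of runs in $\textbf{x}$. So the first task is to verify feasibility: for every $\textbf{y}\in U_n$, we must check $\sum_{\textbf{x}\in B^{\rm D}(\textbf{y})} 1/r(\textbf{x}) \geq 1$. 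The key classical fact here is that the number of \emph{distinct} words obtainable by deleting a single symbol from $\textbf{y}$ equals $r(\textbf{y})$, the number of runs of $\textbf{y}$, and that each such deletion-result $\textbf{x}$ satisfies $r(\textbf{x}) \leq r(\textbf{y})$ (deleting one symbol cannot increase the number of runs). Hence $\sum_{\textbf{x}\in B^{\rm D}(\textbf{y})} 1/r(\textbf{x}) \geq r(\textbf{y}) \cdot (1/r(\textbf{y})) = 1$, confirming that $w$ is a genuine fractional transversal.

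Once feasibility is in hand, the estimate reduces to computing $\sum_{\textbf{x}\in V_{n-1}} 1/r(\textbf{x})$. I would group the vertices of $V_{n-1}$ by their number of runs $i$ and apply the second formula in Lemma \ref{Vruns}: writing $c_i$ for the count of words in $V_{n-1}$ with exactly $i$ runs, the sum becomes $\sum_{i=2}^{n-1} c_i / i$. The goal is to show this is at most $\tfrac{2}{n-2}\big(\binom{n}{n/2} - 2\big)$. Since $\sum_i c_i = |V_{n-1}| = \binom{n}{n/2}$, the bound will follow if the run-weighting $1/i$ shaves off roughly a factor of $(n-2)/2$ from the total; intuitively most of the mass of $V_{n-1}$ sits on words with many runs (close to $n/2$ on average), so the weight $1/i$ is typically around $2/n$. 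The extra $-2$ correction presumably isolates the two extreme single-block-type words (the two words of $V_{n-1}$ with the fewest runs, $i=2$, which carry the largest weight $1/2$) and handles them separately so that the remaining terms admit a clean bound by $2/(n-2)$ times their count.

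The main obstacle I anticipate is the combinatorial bookkeeping in estimating $\sum_{i=2}^{n-1} c_i/i$ tightly enough to reach the stated constant $\tfrac{2}{n-2}$ rather than a weaker $O(1/n)$ factor. This requires manipulating the products of binomials $\binom{n/2-1}{\lceil i/2\rceil-1}\binom{n/2-2}{\lfloor i/2\rfloor-1}$ (plus its symmetric partner) divided by $i$, and recognizing telescoping or absorption identities of the form $\tfrac{1}{i}\binom{n/2-1}{\lceil i/2\rceil - 1} \approx \tfrac{2}{n}\binom{\ldots}{\ldots}$ that let the weighted sum collapse back to a single central-type binomial. A careful treatment of the parity of $i$ (the ceiling/floor split) and of the two boundary terms will be needed; I would verify the identity that replaces $\tfrac{1}{i}\binom{n/2-1}{k-1}$ by a shifted binomial, sum the resulting clean expression, and track the two extremal words separately to produce the $-2$.

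Finally, converting the size bound to a redundancy bound is immediate: from $A^{\rm D}(n) \leq \tfrac{2}{n-2}\big(\binom{n}{n/2}-2\big) \leq \tfrac{2}{n-2}\binom{n}{n/2}$ and Lemma \ref{binomial} giving $\binom{n}{n/2}\leq 2^n/\sqrt{n}$, we get $\log_2 A^{\rm D}(n) \leq n - \tfrac{3}{2}\log_2 n + O(1)$, whence the redundancy $n - \log_2|\mathcal{C}| \geq \tfrac{3}{2}\log_2 n - O(1)$ for any balanced single-deletion-correcting code, as claimed.
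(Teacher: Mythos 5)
Your setup coincides exactly with the paper's: the chain $A^{\rm D}(n)=\nu(\mathcal{H}_{n}^{\rm D})\leq\tau^*(\mathcal{H}_{n}^{\rm D})\leq\sum_{\textbf{x}\in V_{n-1}}w(\textbf{x})$ from Lemma \ref{fractional}, the weight $w(\textbf{x})=1/r(\textbf{x})$, and the feasibility check via $|B^{\rm D}(\textbf{y})|=r(\textbf{y})$ and $r(\textbf{x})\leq r(\textbf{y})$ are all correct, as is the final conversion to redundancy via Lemma \ref{binomial}. The genuine gap is that the heart of the theorem --- the exact inequality $\sum_{i}c_i/i\leq\frac{2}{n-2}\big(\binom{n}{n/2}-2\big)$, where $c_i$ is your count of words in $V_{n-1}$ with $i$ runs --- is never proved. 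You label it ``the main obstacle I anticipate,'' gesture at unspecified ``telescoping or absorption identities,'' and stop. The averaging heuristic ($1/i\approx 2/n$ for typical words) gives the right order of magnitude, but the statement to be proved is an exact inequality for every even $n$ with the sharp constant $2$, and nothing in your outline produces it; moreover your guess about the origin of the $-2$ is wrong (it does not come from the two-run words of $V_{n-1}$), which signals that the intended computation was not actually in hand.

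What the paper does at this point consists of two concrete, non-obvious steps, neither of which appears in your proposal. First, the identity $\binom{n/2-2}{k-1}=\frac{n/2-k}{n/2-1}\binom{n/2-1}{k-1}$ collapses the count of Lemma \ref{Vruns} into
\begin{equation*}
\sum_{i=1}^{n-1}\frac{c_i}{i}\;=\;\frac{4}{n-2}\sum_{i=1}^{n-1}a_i\,\frac{n-i}{i},
\qquad a_i:=\binom{n/2-1}{\lceil i/2\rceil-1}\binom{n/2-1}{\lfloor i/2\rfloor-1},
\end{equation*}
where $2a_i$ is the number of words of $U_n$ with exactly $i$ runs. Second (Lemma \ref{estimate}), the awkward factor $\frac{n-i}{i}$ is eliminated by a pairing argument: matching the term at $i$ with the term at $n-i$ and using, for even $i$, the identity $\binom{n/2-1}{i/2}=\frac{n-i}{i}\binom{n/2-1}{i/2-1}$ (so that $a_{n-i}=a_i\big(\frac{n-i}{i}\big)^2$), the paired inequality reduces to $2x\leq 1+x^2$ with $x=\frac{n-i}{i}$. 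This yields $\sum_{i=1}^{n-1}a_i\frac{n-i}{i}\leq\sum_{i=1}^{n-1}a_i=\frac{1}{2}\binom{n}{n/2}-1$, where the $-1$ (hence the $-2$ in the theorem) comes from truncating at $i=n-1$, i.e.\ discarding $a_n=1$, which counts the two alternating words of $U_n$ with $n$ runs. Without this pairing step, or some equivalent mechanism compensating the large weights at small $i$ against the deficit at large $i$, the claimed bound does not follow from your outline.
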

\begin{proof}
As indicated in Lemma \ref{fractional}, we have $|A^{\rm D}(n)|=\nu(\mathcal{H}_{n}^{\rm D})\leq \tau^*(\mathcal{H}_{n}^{\rm D})$. Let $w(\textbf{x})=\frac{1}{r(\textbf{x})}$, where $\textbf{x}\in V_{n-1}$, and $r(\textbf{x})$ is the number of runs in $\textbf{x}$. Then for any $\textbf{y}\in U_n$,
$$\sum_{\textbf{x}\in B^{\rm D}(\textbf{y})}w(\textbf{x})\overset{(a)}{\geq} \frac{|B^{\rm D}(\textbf{y})|}{r(\textbf{y})}=1,$$
where the inequality (a) follows from $r(\textbf{x})\leq r(\textbf{y})$ (see \cite[Lemma 3.2]{KK13}).
By Lemma \ref{Vruns}, the quantity $\sum_{\textbf{x}\in V_{n-1}} w(\textbf{x})$ equals
\begin{eqnarray*}
  &&2\cdot\sum_{i=1}^{n-1}\bigg[\binom{n/2-1}{\lceil i/2\rceil-1}\binom{n/2-2}{\lfloor i/2\rfloor-1}+\binom{n/2-1}{\lfloor i/2\rfloor-1}\binom{n/2-2}{\lceil i/2\rceil-1}\bigg]\frac{1}{i}\\
   &=& \frac{4}{n-2}\cdot \sum_{i=1}^{n-1}\binom{n/2-1}{\lceil i/2\rceil-1}\binom{n/2-1}{\lfloor i/2\rfloor-1}\frac{n-i}{i}.
\end{eqnarray*}
Then the proof of the theorem needs the following combinatorial inequality.

\begin{lem}\label{estimate}
With the notation above, we have
\begin{eqnarray*}
   \sum_{i=1}^{n-1}\binom{n/2-1}{\lceil i/2\rceil-1}\binom{n/2-1}{\lfloor i/2\rfloor-1}\frac{n-i}{i}
   &\leq & \sum_{i=1}^{n-1}\binom{n/2-1}{\lceil i/2\rceil-1}\binom{n/2-1}{\lfloor i/2\rfloor-1}.
\end{eqnarray*}
\end{lem}
\begin{proof}
To prove the above inequality, it is suffices to show that
\begin{eqnarray*}
 && \binom{n/2-1}{\lceil i/2\rceil-1}\binom{n/2-1}{\lfloor i/2\rfloor-1}\frac{n-i}{i}+\binom{n/2-1}{\lceil \frac{n-i}{2}\rceil-1}\binom{n/2-1}{\lfloor \frac{n-i}{2}\rfloor-1}\frac{i}{n-i}
 \\ &\overset{(a)}{\leq}& \binom{n/2-1}{\lceil i/2\rceil-1}\binom{n/2-1}{\lfloor i/2\rfloor-1}+\binom{n/2-1}{\lceil \frac{n-i}{2}\rceil-1}\binom{n/2-1}{\lfloor \frac{n-i}{2}\rfloor-1},
\end{eqnarray*}
for all $1\leq i\leq n/2-1$.

First assume that $i$ is even. Then the inequality (a) is equivalent to
\begin{eqnarray*}
 && \binom{n/2-1}{i/2-1}\binom{n/2-1}{i/2-1}\frac{n-i}{i}+\binom{n/2-1}{i/2}\binom{n/2-1}{i/2}\frac{i}{n-i}
 \\ &\overset{(b)}{\leq}& \binom{n/2-1}{i/2-1}\binom{n/2-1}{i/2-1}+\binom{n/2-1}{i/2}\binom{n/2-1}{i/2},
\end{eqnarray*}
and the inequality (b) holds if and only if
\begin{eqnarray*}
 2M\frac{n-i}{i} &\overset{(c)}{\leq}& M\bigg(1+\bigg(\frac{n-i}{i}\bigg)^2\bigg),
\end{eqnarray*}
where $M:=\binom{n/2-1}{i/2-1}\binom{n/2-1}{i/2-1}.$ Since $M>0$, the inequality (c) holds by letting $x=\frac{n-i}{i}$ in $x^2-2x+1\geq 0$.
A similar argument works in the case when $i$ is odd.
\end{proof}

By Lemma \ref{estimate} we have
\begin{eqnarray*}
  \sum_{\textbf{x}\in V_{n-1}} w(\textbf{x})&\leq & \frac{4}{n-2}\cdot \sum_{i=1}^{n-1}\binom{n/2-1}{\lceil i/2\rceil-1}\binom{n/2-1}{\lfloor i/2\rfloor-1}\\
   &=&  \frac{2\big(\binom{n}{n/2}-2\big)}{n-2},
\end{eqnarray*}
which is an upper bound of $A^{\rm D}(n)$. By Lemma \ref{binomial}, the redundancy $\rho_{b}(n,N;B^{\rm D})\geq n-\log_2(\binom{n}{n/2}-2)+\log_2(n-2)-1=\frac{3}{2}\log_2n-O(1)$.
\end{proof}

\begin{rem}\label{re1}
By the definition of BVT code, we have
$$A^{\rm D}(n)\geq \binom{n}{n/2}/(n+1),\ {\rm and}\ \lim_{n\rightarrow \infty}\frac{A^{\rm D}(n)}{\binom{n}{n/2}/n}\geq 1.$$
Additionally,  Theorem \ref{deletion1} says
$$A^{\rm D}(n)\leq \frac{2\big(\binom{n}{n/2}-2\big)}{n-2},\ {\rm and}\ \lim_{n\rightarrow \infty}\frac{A^{\rm D}(n)}{2\binom{n}{n/2}/n}\leq 1.$$
Hence,
$$\binom{n}{n/2}/n \lesssim A^{\rm D}(n) \lesssim 2\binom{n}{n/2}/n.$$

In fact, following Levenshtein's method in \cite{L65} (or \cite{S00}), we can get a similar but implicit upper bound on  $A^{\rm D}(n)$. However, the constant factor in the estimation is, as yet, unknown. More precisely, determining the constant $1\leq t\leq 2$ such that $\lim_{n\rightarrow \infty}\frac{A^{\rm D}(n)}{t\binom{n}{n/2}/n}=1$, is  challenging.
\end{rem}


By Remark~\ref{re1}, we can determine the value of $\rho_b(n,N;B_2)$ for $N=1$ as follows.

\begin{thm}\label{DI1}
Consider the error-ball $B_2\in \{B^{\rm D},B^{\rm I},B^{\rm DI}\}$. Then
$$\rho_{b}(n,1;B_2)=\frac{3}{2}\log_2n+\Theta(1).$$
\end{thm}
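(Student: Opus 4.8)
The plan is to prove the theorem for $B^{\rm D}$ first, and then transfer the result to $B^{\rm I}$ and $B^{\rm DI}$ by the reductions already established. For the $B^{\rm D}$ case, I would match a lower bound on $\rho_b(n,1;B^{\rm D})$ against an upper bound. The upper bound on redundancy comes directly from the BVT construction: since $A^{\rm D}(n)\geq \binom{n}{n/2}/(n+1)$, taking logarithms and invoking Lemma~\ref{binomial} gives $\rho_b(n,1;B^{\rm D})=n-\log_2 A^{\rm D}(n)\leq n-\log_2\binom{n}{n/2}+\log_2(n+1)\leq \frac{3}{2}\log_2 n+O(1)$. The matching lower bound $\rho_b(n,1;B^{\rm D})\geq \frac{3}{2}\log_2 n-O(1)$ is precisely the content of Theorem~\ref{deletion1}. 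Together these pin down $\rho_b(n,1;B^{\rm D})=\frac{3}{2}\log_2 n+\Theta(1)$.

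Next I would handle $B^{\rm I}$. The cleanest route is the classical equivalence that a code corrects a single deletion if and only if it corrects a single insertion (cited as \cite{L65} in the preceding subsection), which forces the maximum balanced single-insertion-correcting code to have the same size as $A^{\rm D}(n)$. Since both codes live in $U_n$ and have the same cardinality, their redundancies coincide, so $\rho_b(n,1;B^{\rm I})=\rho_b(n,1;B^{\rm D})=\frac{3}{2}\log_2 n+\Theta(1)$. I would state this equivalence explicitly rather than re-running the hypergraph argument, since the insertion case adds nothing new.

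For $B^{\rm DI}$, the condition $\nu(\mathcal{C};B^{\rm DI})<1$ means $|B^{\rm DI}(\textbf{x})\cap B^{\rm DI}(\textbf{y})|=0$ for all distinct $\textbf{x},\textbf{y}\in\mathcal{C}$, i.e.\ $\mathcal{C}$ simultaneously corrects a single deletion and a single insertion. Because $B^{\rm D}(\textbf{x})\subseteq B^{\rm DI}(\textbf{x})$, any $(n,1;B^{\rm DI})$ code is in particular an $(n,1;B^{\rm D})$ code, giving the upper bound $A^{\rm DI}(n)\leq A^{\rm D}(n)$ on sizes and hence $\rho_b(n,1;B^{\rm DI})\geq \rho_b(n,1;B^{\rm D})\geq \frac{3}{2}\log_2 n-O(1)$. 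For the reverse direction I would verify that a BVT code already suppresses all $B^{\rm DI}$ collisions: by Corollary~\ref{DIcondition} a nonzero intersection forces $\textbf{x},\textbf{y}$ to be Type-A- or Type-B-confusable, and such pairs differ in exactly two coordinates, so a single-deletion-correcting property (which the BVT code enjoys) already separates them. Thus $A^{\rm DI}(n)$ and $A^{\rm D}(n)$ agree up to the same constant factors, yielding $\rho_b(n,1;B^{\rm DI})=\frac{3}{2}\log_2 n+\Theta(1)$.

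The only genuinely delicate point is the $B^{\rm DI}$ lower-order verification: I must confirm that the BVT code, designed only to correct deletions, also has empty $B^{\rm I}$-collisions on top of empty $B^{\rm D}$-collisions, so that its size is a legitimate lower bound for $A^{\rm DI}(n)$ and not merely for $A^{\rm D}(n)$. The insertion-deletion equivalence and Proposition~\ref{D-Icondition}(iii), which says $|B^{\rm D}(\textbf{x})\cap B^{\rm D}(\textbf{y})|$ vanishes exactly when $|B^{\rm I}(\textbf{x})\cap B^{\rm I}(\textbf{y})|$ does, handle this: a single-deletion-correcting balanced code automatically avoids both types of confusability, hence is simultaneously single-insertion-correcting, so the BVT lower bound carries over verbatim. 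Everything else is bookkeeping with Lemma~\ref{binomial}, and since only the $\Theta(1)$ term is affected by the deletion/insertion inclusions, the leading coefficient $\frac{3}{2}$ is identical across all three channels.
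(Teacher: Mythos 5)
Your proposal is correct and follows essentially the same route as the paper: the BVT construction supplies the redundancy upper bound, Theorem~\ref{deletion1} supplies the matching $\frac{3}{2}\log_2 n - O(1)$ lower bound (this is exactly what Remark~\ref{re1} combines), the Levenshtein deletion/insertion equivalence transfers the result to $B^{\rm I}$, and your observation that $B^{\rm DI}$-collisions are the length-disjoint union of $B^{\rm D}$- and $B^{\rm I}$-collisions makes explicit the case the paper leaves implicit. One cosmetic slip: Type-A-confusable pairs with $m\geq 2$ differ in $2m$ coordinates, not exactly two, but this remark is not load-bearing, since your final paragraph's argument via Proposition~\ref{D-Icondition}(iii) is the correct justification that the BVT code has empty $B^{\rm I}$- and hence empty $B^{\rm DI}$-collisions.
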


\subsection{The case $N\geq 2$}

%

Chee \emph{et al.} \cite{CKVVY18} defined a special class of binary codes in terms of the period of codewords, which can be used to correct deletions and sticky insertions when the two heads (in racetrack memory) are well separated. We will use this idea to construct balanced reconstruction codes. The following definition is necessary. 

\begin{defe}\label{period}
Let $\ell$ and $m$ be two positive integers with $\ell<m$. Let $\textbf{u}=(u_1,u_2,\ldots, u_m)\in \F_2^m$. We say that the word $\textbf{u}$ has period $\ell$ if $\ell$ is the smallest integer such that  $u_i=u_{i+\ell}$ for all $1\leq i\leq m-\ell$.
\end{defe}

Let $\mathcal{R}_2^b(n,\ell,m)$ denote the set of all binary words $\textbf{c}$ in $U_n$ such that the length of any $\ell'$-periodic ($\ell' \leq \ell$) subword of $\textbf{c}$ is at most $m$. For example, $\mathcal{R}_2^b(6,1,2)$ is the set
$$\{110100,110010,101100,101010,101001,100110,100101,011010,$$
$$011001,010110,010101,010011,001101,001011\}.$$

We are now ready to characterize the size of the set $\mathcal{R}_2^b(n,\ell,m)$ in some particular cases.

\begin{lem}\label{case1}
For all $n,m$ and $\ell=1$, we have
$$|\mathcal{R}_2^b(n,1,m)|\geq \binom{n}{n/2}\bigg(1-n\bigg(\frac{1}{2}\bigg)^m\bigg).$$
In particular, if $m=\lceil \log_2n\rceil+1$, we have $|\mathcal{R}_2^b(n,1,\lceil \log_2n\rceil+1)|\geq \frac{\binom{n}{n/2}}{2}$.
\end{lem}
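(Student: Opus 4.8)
The plan is to bound the number of \emph{bad} words in $U_n$, namely those balanced words possessing a run of length exceeding $m$, and to show this count is at most $n\binom{n}{n/2}(1/2)^m$. Recalling the definition, for $\ell=1$ a word lies in $\mathcal{R}_2^b(n,1,m)$ precisely when every maximal run (of $0$s or of $1$s) has length at most $m$; hence $|\mathcal{R}_2^b(n,1,m)|=\binom{n}{n/2}-(\text{number of bad words})$, and the desired inequality follows once the bad words are bounded.

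First I would set up a union bound over the location and symbol of a long run. For each starting index $j$ with $1\le j\le n-m$ and each symbol $b\in\{0,1\}$, let $A_{j,b}\subseteq U_n$ be the set of balanced words whose coordinates in positions $j,j+1,\dots,j+m$ are all equal to $b$. Any balanced word containing a run of length at least $m+1$ must contain a block of $m+1$ consecutive equal symbols, hence lies in some $A_{j,b}$; thus the set of bad words is contained in $\bigcup_{j,b}A_{j,b}$.

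Next I would estimate $|A_{j,b}|$. Fixing the $m+1$ coordinates to be $b$ leaves $n-m-1$ free positions into which the remaining ones must be placed, giving $|A_{j,1}|=\binom{n-m-1}{n/2-m-1}$ and $|A_{j,0}|=\binom{n-m-1}{n/2}$. The key step is the ratio estimate
\[
\frac{\binom{n-m-1}{n/2-m-1}}{\binom{n}{n/2}}=\frac{\binom{n-m-1}{n/2}}{\binom{n}{n/2}}=\prod_{k=0}^{m}\frac{n/2-k}{n-k}\le\Big(\tfrac12\Big)^{m+1},
\]
where each factor satisfies $\frac{n/2-k}{n-k}\le\frac12$ for $k\ge0$ (equivalently $n-2k\le n-k$). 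Consequently $|A_{j,b}|\le\binom{n}{n/2}(1/2)^{m+1}$ for both symbols, and summing over the $2(n-m)$ pairs $(j,b)$ yields at most $(n-m)\binom{n}{n/2}(1/2)^m\le n\binom{n}{n/2}(1/2)^m$ bad words, which gives the stated lower bound on $|\mathcal{R}_2^b(n,1,m)|$.

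Finally, for the special case $m=\lceil\log_2 n\rceil+1$, substituting gives $n(1/2)^m\le n\cdot 2^{-\log_2 n-1}=\tfrac12$, so the factor $1-n(1/2)^m\ge\tfrac12$ and the second claim follows. The argument is elementary throughout; the only points requiring care are the telescoping ratio estimate above and the bookkeeping on the range of $j$ (namely $j\le n-m$, so that the block of $m+1$ symbols fits). I do not expect a genuine obstacle, though one should note that the bound is informative only once $m$ is large enough that $n(1/2)^m<1$.
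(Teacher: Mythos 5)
Your proof is correct and follows essentially the same route as the paper: a union bound over the $2(n-m)$ (starting position, symbol) pairs for a block of $m+1$ equal symbols, with each event counted by $\binom{n-m-1}{n/2}$ balanced words and controlled by the telescoping ratio $\prod_k \frac{n/2-k}{n-k}\le (1/2)^{m+1}$. The paper merely packages the same computation slightly differently, rewriting $2(n-m)\binom{n-m-1}{n/2}$ as $(n-2m)\binom{n-m}{n/2}$ before applying the analogous bound $\binom{n-m}{n/2}\le\binom{n}{n/2}(1/2)^m$.
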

\begin{proof}
Let $\overline{\mathcal{R}_2^b(n,1,m)}$ denote the complementary set $U_n\backslash \mathcal{R}_2^b(n,1,m)$, then $|\mathcal{R}_2^b(n,1,m)| = |U_n|-|\overline{\mathcal{R}_2^b(n,1,m)}|$.
By  definition, a word $\textbf{c}\in U_n$ belongs to $\overline{\mathcal{R}_2^b(n,1,m)}$ if and only if it contains a run with length $m+1$, which implies the following upper bound on the size of $\overline{\mathcal{R}_2^b(n,1,m)}$
\begin{eqnarray*}
  |\overline{\mathcal{R}_2^b(n,1,m)}| &\leq& 2(n-m)\binom{n-m-1}{n/2}=(n-2m)\binom{n-m}{n/2}.
\end{eqnarray*}
Then it follows that
\begin{eqnarray*}
  |\mathcal{R}_2^b(n,1,m)| &\geq& \binom{n}{n/2}-(n-2m)\binom{n-m}{n/2} \\
   &\overset{(a)}{\geq}& \binom{n}{n/2}\bigg(1-(n-2m)\bigg(\frac{1}{2}\bigg)^m\bigg) \\
   &\geq& \binom{n}{n/2}\bigg(1-n\bigg(\frac{1}{2}\bigg)^m\bigg),
\end{eqnarray*}
where the inequality (a) follows from the fact that $\binom{n-m}{n/2}\leq \binom{n}{n/2}\big(\frac{1}{2}\big)^m$.
\end{proof}

\begin{lem}\label{case2}
For all $n,m$ and $\ell=2$, we have
$$|\mathcal{R}_2^b(n,2,m)|\geq \binom{n}{n/2}\bigg(1-n\bigg(\frac{1}{2}\bigg)^{\lceil \frac{m-1}{2}\rceil}\bigg).$$
In particular, if $n\geq 12$ and $m=2\lceil \log_2n\rceil+3$, we have $|\mathcal{R}_2^b(n,2,2\lceil \log_2n\rceil+3)|\geq \frac{\binom{n}{n/2}}{2}$.
\end{lem}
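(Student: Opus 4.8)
The plan is to mirror the structure of the proof of Lemma~\ref{case1}, replacing the notion of a ``long run'' with the appropriate notion for period $\ell=2$. Concretely, I would again write $|\mathcal{R}_2^b(n,2,m)|=|U_n|-|\overline{\mathcal{R}_2^b(n,2,m)}|$, where $\overline{\mathcal{R}_2^b(n,2,m)}$ is the set of balanced words containing an $\ell'$-periodic subword of length $m+1$ for some $\ell'\le 2$. The essential observation is that a word lies in $\overline{\mathcal{R}_2^b(n,2,m)}$ precisely when it contains a forbidden pattern: either a run (period $1$) of length $m+1$, or a genuinely period-$2$ block such as $(01)^k0$ or $(10)^k1$ of length $m+1$. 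So first I would make this characterization precise and observe that in either case the offending subword has length $m+1$ and is determined (up to its two possible ``phases'' $01\cdots$ versus $10\cdots$, and the $2$ choices of symbol for the pure-run case) by its starting position.

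Next I would bound $|\overline{\mathcal{R}_2^b(n,2,m)}|$ by a union bound over the starting position of the forbidden length-$(m+1)$ window. Fixing such a window pins down $m+1$ coordinates as an alternating (or constant) string, and the remaining $n-(m+1)$ coordinates must be filled so that the total weight is $n/2$; the count of such completions is a single binomial $\binom{n-m-1}{\,\cdot\,}$ up to constants, exactly as in Lemma~\ref{case1}. The key quantitative point is that a period-$2$ forbidden block of length $m+1$ contributes a factor of roughly $(1/2)^{\lceil(m-1)/2\rceil}$ rather than $(1/2)^m$, because an alternating block of length $m+1$ fixes only about $\lceil(m-1)/2\rceil$ ``free'' bits relative to the balanced completion — this is precisely where the exponent $\lceil\frac{m-1}{2}\rceil$ in the statement comes from, and it is the heart of the estimate. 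After collecting the contributions I would use $\binom{n-m'}{n/2}\le\binom{n}{n/2}(1/2)^{m'}$ (the same inequality invoked as (a) in Lemma~\ref{case1}) to pull out the binomial factor and arrive at
\[
|\mathcal{R}_2^b(n,2,m)|\ge \binom{n}{n/2}\left(1-n\left(\tfrac{1}{2}\right)^{\lceil\frac{m-1}{2}\rceil}\right).
\]

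For the ``in particular'' clause I would substitute $m=2\lceil\log_2 n\rceil+3$, so that $\lceil\frac{m-1}{2}\rceil=\lceil\log_2 n\rceil+1\ge \log_2 n+1$, giving $n(1/2)^{\lceil\frac{m-1}{2}\rceil}\le n\cdot\frac{1}{2n}=\frac12$, whence $|\mathcal{R}_2^b(n,2,m)|\ge\frac12\binom{n}{n/2}$; I expect the hypothesis $n\ge 12$ to be exactly what is needed to absorb the rounding in the ceilings and keep the resulting bound clean. The main obstacle, and the step I would be most careful about, is getting the exponent $\lceil\frac{m-1}{2}\rceil$ right: I must correctly count how many coordinates of a length-$(m+1)$ alternating window are effectively constrained once we also impose the global balanced condition, since an alternating block is itself balanced (or off-balance by one) and therefore consumes weight differently than a pure run. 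Handling the odd/even parity of $m$ and the two phase choices cleanly — rather than the single combinatorial estimate that sufficed for $\ell=1$ — is the delicate part; the rest of the argument is a routine union bound followed by the same binomial inequality used in the previous lemma.
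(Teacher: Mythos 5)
Your overall strategy coincides with the paper's: the paper formalizes your split into ``two kinds of forbidden windows'' by writing $\mathcal{R}_2^b(n,2,m)=\mathcal{B}_2(n,2,m)\cap\mathcal{R}_2^b(n,1,m)$, where $\mathcal{B}_2(n,2,m)$ is the set of balanced words in which every $2$-periodic subword has length at most $m$; it reuses Lemma~\ref{case1} for the run (period-$1$) part and applies a union bound over starting positions and phases for the genuinely period-$2$ windows, exactly as you propose. Your computation for the ``in particular'' clause is also the paper's, except that the hypothesis $n\ge 12$ is there simply to guarantee $m=2\lceil\log_2 n\rceil+3\le n$, not to absorb rounding in the ceilings.

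There is, however, a genuine gap at precisely the step you flag as delicate. An alternating window of length $m+1$ contains $\lceil (m+1)/2\rceil$ ones (or that many zeros, for the other phase), so the number of balanced completions of a fixed window is $\binom{n-m-1}{n/2-\lceil (m+1)/2\rceil}$: the lower index is shifted, and the inequality you plan to invoke, $\binom{n-m'}{n/2}\le\binom{n}{n/2}(1/2)^{m'}$, does not apply to it. What is needed, and what the paper proves separately by writing the ratio of the two binomials as a product of fractions and bounding each factor, is the shifted-index estimate
$$\binom{n-m}{\frac{n}{2}-\lceil \frac{m+1}{2}\rceil}\ \le\ \binom{n}{n/2}\bigg(\frac{1}{2}\bigg)^{\lceil \frac{m+1}{2}\rceil}\bigg(\frac{n/2}{n-\lceil \frac{m+1}{2}\rceil}\bigg)^{\lfloor \frac{m-1}{2}\rfloor}\ \le\ \binom{n}{n/2}\bigg(\frac{1}{2}\bigg)^{\lceil \frac{m+1}{2}\rceil}.$$
This also corrects your accounting of where the final exponent comes from: the period-$2$ windows contribute $(1/2)^{\lceil(m+1)/2\rceil}$ per starting position, not $(1/2)^{\lceil(m-1)/2\rceil}$; the exponent $\lceil\frac{m-1}{2}\rceil$ in the statement appears only at the very end, when the period-$1$ term $n(1/2)^m$ and the period-$2$ term $n(1/2)^{\lceil(m+1)/2\rceil}$ are added and the sum is bounded by $2n(1/2)^{\lceil(m+1)/2\rceil}=n(1/2)^{\lceil(m-1)/2\rceil}$. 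Without the shifted-index inequality your union bound cannot be converted into a bound of the stated form, so this step must be supplied for the argument to close.
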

\begin{proof}
Let
$\mathcal{B}_2(n,2,m)$ denote the set of all binary words $\textbf{c}$ in $U_n$ such that the length of any $2$-periodic subword of $\textbf{c}$ is at most $m$. Then $\mathcal{R}_2^b(n,2,m)=\mathcal{B}_2(n,2,m)\cap \mathcal{R}_2^b(n,1,m)$, thus
\begin{eqnarray*}
  |\mathcal{R}_2^b(n,2,m)| &\geq & |\mathcal{R}_2^b(n,1,m)|-|\overline{\mathcal{B}_2(n,2,m)}|,
\end{eqnarray*}
where $\overline{\mathcal{B}_2(n,2,m)}$ denotes the complementary set $U_n\backslash \mathcal{B}_2(n,2,m)$.
Note that a word $\textbf{u}\in U_n$ belongs to $\overline{\mathcal{B}_2(n,2,m)}$ if and only if it contains a subword of length $m+1$ with period 2.
Hence,
\begin{eqnarray*}
  |\overline{\mathcal{B}_2(n,2,m)}| &\leq& 2(n-m)\binom{n-m-1}{\frac{n}{2}-\lceil \frac{m+1}{2}\rceil}
   \leq n\binom{n-m}{\frac{n}{2}-\lceil \frac{m+1}{2}\rceil}.
\end{eqnarray*}
By Lemma \ref{case1},
\begin{eqnarray*}
  |\mathcal{R}_2^b(n,2,m)| &\geq& \binom{n}{n/2}\bigg(1-n\bigg(\frac{1}{2}\bigg)^m\bigg)-n\binom{n-m}{\frac{n}{2}-\lceil \frac{m+1}{2}\rceil} \\
   &\overset{(a)}{\geq}& \binom{n}{n/2}\bigg(1-n\bigg(\bigg(\frac{1}{2}\bigg)^m+\bigg(\frac{1}{2}\bigg)^{\lceil \frac{m+1}{2}\rceil}\bigg)\bigg) \\
   &\geq& \binom{n}{n/2}\bigg(1-n\bigg(\frac{1}{2}\bigg)^{\lceil \frac{m-1}{2}\rceil}\bigg),
\end{eqnarray*}
where the inequality (a) follows from the fact that
\begin{eqnarray*}
  \binom{n-m}{\frac{n}{2}-\lceil \frac{m+1}{2}\rceil} &\leq& \binom{n}{n/2}\bigg(\frac{1}{2}\bigg)^{\lceil \frac{m+1}{2}\rceil}\bigg(\frac{\frac{n}{2}}{n-\lceil \frac{m+1}{2}\rceil}\bigg)^{\lfloor \frac{m-1}{2}\rfloor} \\
   &\leq&  \binom{n}{n/2}\bigg(\frac{1}{2}\bigg)^{\lceil \frac{m+1}{2}\rceil}.
\end{eqnarray*}
If $m=2\lceil \log_2n\rceil+3$, we have
$$|\mathcal{R}_2^b(n,2,2\lceil \log_2n\rceil+3)|\geq \binom{n}{n/2}\bigg(1-n\bigg(\frac{1}{2}\bigg)^{\log_2n+1}\bigg)=\frac{\binom{n}{n/2}}{2},$$
and the condition $m\leq n$ holds as long as $n\geq 12$.
\end{proof}

For any $\textbf{x}=(x_1,x_2,\ldots,x_n)\in \F_2^n$, define the  inversion number
$${\rm Inv}(\textbf{x})=|\{(i,j):1\leq i<j\leq n, x_i>x_j\}|.$$
For example, ${\rm Inv}(\textbf{x})=7$ for $\textbf{x}=(1010110)\in \F_2^7$.

Based on Lemma \ref{case2}, we give the following estimate of the optimal redundancy of an $(n,N;B_2)$-reconstruction code, with $B_2\in \{B^{\rm D},B^{\rm I}\}$ and $N\geq 2$.

\begin{thm}\label{D-Iredundancy}
Consider the error-ball $B_2\in \{B^{\rm D},B^{\rm I}\}$. Then
\[
\rho_{b}(n,N;B_2)=\begin{cases}
\frac{1}{2}\log_2n+\log_2\log_2n+\Theta(1), & N=2, \\
\Delta, & N\geq 3.
\end{cases}
\]
\end{thm}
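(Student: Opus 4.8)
The plan is to prove the two cases separately, using the intersection-size characterizations from Section~\ref{Sec:2} to translate the reconstruction constraint into a confusability-avoidance constraint, and then to match lower and upper bounds.

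For the case $N \ge 3$, Proposition~\ref{D-Icondition} tells us that for any two distinct $\textbf{x},\textbf{y}\in U_n$ we have $|B_2(\textbf{x})\cap B_2(\textbf{y})|\le 2$, so the condition $\nu(\mathcal{C};B_2)<N$ is automatically satisfied by every $\mathcal{C}\subseteq U_n$ once $N\ge 3$. Hence the extremal code is $U_n$ itself, giving $\rho_b(n,N;B_2)=n-\log_2\binom{n}{n/2}=\Delta$. This is exactly the remark already made before Proposition~\ref{D-Icondition}, so this case is immediate.

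The substantive case is $N=2$, where the requirement becomes $|B_2(\textbf{x})\cap B_2(\textbf{y})|\le 1$ for all distinct $\textbf{x},\textbf{y}\in\mathcal{C}$; by Proposition~\ref{D-Icondition}(ii) this is equivalent to demanding that no two codewords are Type-A-confusable. For the \emph{lower bound}, I would invoke Proposition~\ref{lowerbound}(i) directly: a code in which no pair is Type-A-confusable has redundancy at least $\frac{1}{2}\log_2 n+\log_2\log_2 n-O(1)$. For the \emph{upper bound} (i.e. constructing a large code), the idea is that a Type-A-confusable pair arises from a long $2$-periodic subword of the form $(10)^m$ or $(01)^m$; so it suffices to forbid long $2$-periodic subwords, which is precisely the role of the set $\mathcal{R}_2^b(n,2,m)$. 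I would take $m=2\lceil\log_2 n\rceil+3$ and apply Lemma~\ref{case2} to get a subcode of size at least $\frac{1}{2}\binom{n}{n/2}$ inside which no $2$-periodic run is long enough to create a Type-A-confusable pair, and then partition this subcode by a VT-type syndrome (mirroring the BVT construction of Definition~\ref{restrictedVT}) to eliminate the short Type-A-confusable pairs, picking the largest class. The bookkeeping needs the inversion number ${\rm Inv}(\textbf{x})$ introduced just above the theorem, which controls how Type-A confusability interacts with the VT syndrome, so that a single congruence class is free of Type-A-confusable pairs while still of size $\binom{n}{n/2}/\Theta(n\log n)$.

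Combining, the surviving class has redundancy $n-\log_2\bigl(\binom{n}{n/2}/\Theta(n\log n)\bigr)=\Delta+\log_2 n+\log_2\log_2 n+O(1)=\frac{1}{2}\log_2 n+\log_2\log_2 n+O(1)$, matching the lower bound and yielding $\Theta(1)$. The main obstacle I anticipate is the upper-bound construction: showing that restricting to $\mathcal{R}_2^b(n,2,m)$ kills all \emph{long} Type-A-confusable pairs while a single VT-syndrome class simultaneously kills all \emph{short} ones (those with small $m$) without sacrificing more than a $\Theta(n\log n)$ factor in size. Verifying that the labeling ${\rm Inv}$ (or the standard VT weighted sum) separates every short Type-A-confusable pair into distinct classes, and that the length cap $m=2\lceil\log_2 n\rceil+3$ is exactly what forces the remaining confusable pairs to be short, is the delicate step; the two counting lemmas (Lemma~\ref{case2} and Lemma~\ref{binomial}) then make the size estimate routine.
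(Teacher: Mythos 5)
Your treatment of the case $N\ge 3$ and your lower bound for $N=2$ (avoidance of Type-A-confusability plus Proposition~\ref{lowerbound}(i)) are exactly the paper's argument, and your overall plan for the upper bound --- restrict to $\mathcal{R}_2^b(n,2,P)$ with $P=2\lceil\log_2 n\rceil+3$ and then split by an inversion-number congruence --- is also the paper's construction. However, your upper bound as stated contains a genuine quantitative gap. You partition the subcode by a ``VT-type syndrome mirroring the BVT construction of Definition~\ref{restrictedVT},'' i.e.\ a modulus of size $\Theta(n)$, and accordingly you claim a surviving class of size $\binom{n}{n/2}/\Theta(n\log n)$. Such a class has redundancy
\[
n-\log_2\Bigl(\tbinom{n}{n/2}\big/\Theta(n\log n)\Bigr)=\Delta+\log_2 n+\log_2\log_2 n+O(1)=\tfrac{3}{2}\log_2 n+\log_2\log_2 n+O(1),
\]
since $\Delta=\tfrac{1}{2}\log_2 n+\Theta(1)$; your final display asserts this equals $\tfrac{1}{2}\log_2 n+\log_2\log_2 n+O(1)$, which is an algebra slip that hides a surplus of $\log_2 n$. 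With a $\Theta(n)$-size modulus the construction therefore does \emph{not} match the lower bound, and the theorem is not proved.

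The missing idea is that within $\mathcal{R}_2^b(n,2,P)$ the modulus can be taken to be only $1+P/2=\Theta(\log n)$, not $\Theta(n)$. Indeed, if two words of $\mathcal{R}_2^b(n,2,P)$ are Type-A-confusable with parameter $m$, then $(10)^m$ is a $2$-periodic subword of one of them, forcing $2m\le P$, i.e.\ $1\le m\le P/2$; and for a Type-A-confusable pair the inversion numbers differ by exactly $\pm m$ (the cross-inversions with the common prefix and suffix cancel, and the internal inversion counts of $(10)^m$ and $(01)^m$ differ by $m$). Hence the classes $\{\textbf{c}\in\mathcal{R}_2^b(n,2,P):{\rm Inv}(\textbf{c})\equiv t\pmod{1+P/2}\}$ contain no Type-A-confusable pair at all --- this is the paper's appeal to Proposition~\ref{D-Icondition}(ii) together with \cite[Theorem~17]{CKNY21} --- and by pigeonhole over only $1+P/2$ classes some class has size at least $\binom{n}{n/2}/(2+P)=\binom{n}{n/2}/\Theta(\log n)$, giving redundancy $\tfrac{1}{2}\log_2 n+\log_2\log_2 n+O(1)$ as required. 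In short: the periodicity cap and the syndrome are not two separate devices killing ``long'' and ``short'' confusable pairs respectively, as you describe; rather, the periodicity cap bounds the confusability parameter $m$ so that a syndrome with a very small ($\Theta(\log n)$) modulus already separates every confusable pair.
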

\begin{proof}
The case $N\geq 3$ is trivial. Let
$$\mathcal{C}_2(n,t,P)=\{\textbf{c}\in \mathcal{R}_2^b(n,2,P): {\rm Inv}(\textbf{c})\equiv t\pmod {1+P/2}\},$$
where $t\in \Z_{1+P/2}$ and $P$ is even. It follows from Proposition \ref{D-Icondition}(ii) and \cite[Theorem 17]{CKNY21} that
$$|B_2(\textbf{x})\cap B_2(\textbf{y})|<2,\ {\rm for}\ \textbf{x}\neq\textbf{y}\in \mathcal{C}_2(n,t,P).$$
Assume that $P=2\lceil \log_2n\rceil+3$. Then by Lemma \ref{case2}, there exists an $(n,2;B_2)$-reconstruction code with suitable $t$ such that
$$|\mathcal{C}_2(n,t,2\lceil \log_2n\rceil+3)|\geq \binom{n}{n/2}/(2+P).$$
This implies by Lemma \ref{binomial}, that $\mathcal{C}_2(n,t,2\lceil \log_2n\rceil+3)$ has redundancy at most
$\frac{1}{2}\log_2n+\log_2(2+P)+\frac{1}{2}=\frac{1}{2}\log_2n+\log_2\log_2n+O(1)$.

On the other hand, let $\mathcal{C}$ be any $(n,2;B_2)$-reconstruction code. By Proposition \ref{D-Icondition}(ii), every pair of different words in $\mathcal{C}$ are not Type-A-confusable (see Definition \ref{confusableAB}). Then the desired result follows immediately from Theorem \ref{lowerbound}(i).
\end{proof}

A further extension of Theorem \ref{D-Iredundancy} is given by the following theorem.

\begin{thm}\label{DIredundancy}
Consider the error-ball $B^{\rm DI}$. Then
\[
\rho_{b}(n,N;B^{\rm DI})=\begin{cases}
\frac{3}{2}\log_2n+\Theta(1), & N=2, \\
\frac{1}{2}\log_2n+\log_2\log_2n+\Theta(1), & N\in\{3,4\},\\
\Delta, & N\geq 5.
\end{cases}
\]
\end{thm}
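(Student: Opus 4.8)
The plan is to prove Theorem~\ref{DIredundancy} by treating each range of $N$ separately, leveraging the already-established intersection-size characterizations in Corollary~\ref{DIcondition} together with the confusability lower bounds in Proposition~\ref{lowerbound}. Recall from Corollary~\ref{DIcondition} that for distinct $\textbf{x},\textbf{y}\in U_n$ the value $|B^{\rm DI}(\textbf{x})\cap B^{\rm DI}(\textbf{y})|$ lies in $\{0,2,4\}$, with the value $4$ occurring exactly when $\textbf{x},\textbf{y}$ are Type-A-confusable; and the case $N\geq 5$ is already recorded there, giving $\rho_b(n,N;B^{\rm DI})=\Delta$. So only $N=2$ and $N\in\{3,4\}$ remain.

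First I would dispose of the two middle values $N\in\{3,4\}$. Since the intersection sizes are even and capped at $4$, requiring $\nu(\mathcal{C};B^{\rm DI})<3$ and requiring $\nu(\mathcal{C};B^{\rm DI})<4$ both amount to forbidding the intersection value $4$, i.e.\ forbidding Type-A-confusable pairs. Hence $\rho_b(n,3;B^{\rm DI})=\rho_b(n,4;B^{\rm DI})$, and the target is $\frac{1}{2}\log_2 n+\log_2\log_2 n+\Theta(1)$. The upper bound comes from the same construction used in Theorem~\ref{D-Iredundancy}: the code $\mathcal{C}_2(n,t,P)$ with $P=2\lceil\log_2 n\rceil+3$ already avoids Type-A-confusable pairs (by Proposition~\ref{D-Icondition}(ii) its words satisfy $|B_2(\textbf{x})\cap B_2(\textbf{y})|<2$ for the single-error balls, which forces no Type-A-confusability), and one checks this same property guarantees $|B^{\rm DI}(\textbf{x})\cap B^{\rm DI}(\textbf{y})|<4$; its redundancy is $\frac{1}{2}\log_2 n+\log_2\log_2 n+O(1)$. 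The matching lower bound is exactly Proposition~\ref{lowerbound}(i): any such code has all pairs not Type-A-confusable, so its redundancy is at least $\frac{1}{2}\log_2 n+\log_2\log_2 n-O(1)$.

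Next I would handle $N=2$, where one forbids every nonzero intersection, i.e.\ $\nu(\mathcal{C};B^{\rm DI})<2$ means no pair may have intersection $2$ or $4$. By Corollary~\ref{DIcondition} a pair has intersection $0$ exactly when it is neither Type-A- nor Type-B-confusable (combining with Proposition~\ref{D-Icondition}, where intersection $2$ corresponds to Type-A and the deletion/insertion single-ball intersection $1$ corresponds to Type-B). Thus an $(n,2;B^{\rm DI})$-reconstruction code must avoid both confusability types simultaneously; in particular it must be a balanced single-deletion (equivalently single-insertion) correcting code, so its size is at most $A^{\rm D}(n)$. The target $\frac{3}{2}\log_2 n+\Theta(1)$ then follows: the lower bound on redundancy is the upper bound on $A^{\rm D}(n)$ from Theorem~\ref{deletion1}, namely $\rho_b\geq \frac{3}{2}\log_2 n-O(1)$, while the matching upper bound comes from exhibiting a balanced VT-type code (as in Theorem~\ref{DI1} and Remark~\ref{re1}, giving $A^{\rm D}(n)\gtrsim \binom{n}{n/2}/n$) that additionally avoids Type-A- and Type-B-confusability, which the BVT construction already does.

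The main obstacle is the $N=2$ upper bound: I must produce an explicit balanced code of redundancy $\frac{3}{2}\log_2 n+O(1)$ that simultaneously forbids all three confusability phenomena contributing to a nonzero $B^{\rm DI}$-intersection. The natural candidate is a single $BVT_a(n)$ coset, whose single-deletion-correcting property already rules out intersection $2$ and $4$ (Type-A) and the residual Type-B intersection value $1$ in the individual balls; I would verify that the modular VT constraint indeed kills all Type-B-confusable pairs as well, so that $|B^{\rm DI}(\textbf{x})\cap B^{\rm DI}(\textbf{y})|=0$ for every distinct pair, and then invoke Remark~\ref{re1} to pin the size at $\binom{n}{n/2}/\Theta(n)$, yielding redundancy $\frac{3}{2}\log_2 n+\Theta(1)$.
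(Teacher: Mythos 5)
Your treatment of $N\geq 5$ and $N\in\{3,4\}$ is correct and is essentially the paper's own argument: since the intersection sizes lie in $\{0,2,4\}$, the cases $N=3$ and $N=4$ both reduce to forbidding Type-A-confusable pairs, the code $\mathcal{C}_2(n,t,2\lceil\log_2n\rceil+3)$ from Theorem \ref{D-Iredundancy} supplies the upper bound, and Proposition \ref{lowerbound}(i) supplies the matching lower bound.

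The $N=2$ part, however, contains a genuine gap. You claim that a pair has $|B^{\rm DI}(\textbf{x})\cap B^{\rm DI}(\textbf{y})|=0$ \emph{exactly when} it is neither Type-A- nor Type-B-confusable, and from "avoids both confusability types" you deduce "must be a balanced single-deletion-correcting code, so its size is at most $A^{\rm D}(n)$". Only one direction of that equivalence is true: Proposition \ref{D-Icondition}(i) characterizes deletion-ball intersection $1$ only under the hypothesis $d_{\rm H}(\textbf{x},\textbf{y})=2$. For $d_{\rm H}\geq 4$ there are pairs with deletion-ball intersection $1$ (hence $B^{\rm DI}$-intersection $2$) that are neither Type-A- nor Type-B-confusable; concretely, $\textbf{x}=101100$ and $\textbf{y}=011010$ in $U_6$ satisfy $d_{\rm H}(\textbf{x},\textbf{y})=4$ and share the word $01100$ in their deletion balls (delete the first bit of $\textbf{x}$, the fifth bit of $\textbf{y}$), yet no decomposition exhibits them as Type-A- or Type-B-confusable. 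Consequently, avoiding both confusability types does \emph{not} force a code to be single-deletion-correcting, so your bound "size at most $A^{\rm D}(n)$" does not follow from the stated reasoning; confusability avoidance alone would only yield a redundancy lower bound of $\frac{1}{2}\log_2n+\log_2\log_2n-O(1)$ via Proposition \ref{lowerbound}, short of the required $\frac{3}{2}\log_2n-O(1)$. The repair is the paper's direct parity argument: the words of $B^{\rm D}(\textbf{x})$ and $B^{\rm I}(\textbf{x})$ have different lengths, so by Proposition \ref{D-Icondition}(iii) we have $|B^{\rm DI}(\textbf{x})\cap B^{\rm DI}(\textbf{y})|=2\,|B^{\rm D}(\textbf{x})\cap B^{\rm D}(\textbf{y})|$; hence $\nu(\mathcal{C};B^{\rm DI})<2$ is equivalent to $\nu(\mathcal{C};B^{\rm DI})=0$, i.e., to $\mathcal{C}$ being a balanced single-deletion-correcting code, giving $\rho_b(n,2;B^{\rm DI})=\rho_b(n,1;B^{\rm DI})$, which is exactly Theorem \ref{DI1} (Theorem \ref{deletion1} for the lower bound, the BVT code for the upper bound). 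This also disposes of the concern in your final paragraph: once a code is single-deletion-correcting, every $B^{\rm DI}$-intersection is automatically zero, so there is nothing further to verify about Type-B pairs in $BVT_a(n)$.
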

\begin{proof}
The case $N\geq 5$ is trivial. Since $|B^{\rm DI}(\textbf{x})\cap B^{\rm DI}(\textbf{y})|\in\{0,2,4\}$ for distinct words $\textbf{x},\textbf{y}\in U_n$ by Corollary \ref{DIcondition}, we have $\rho_{b}(n,2;B^{\rm DI})=\rho_{b}(n,1;B^{\rm DI})$ and $\rho_{b}(n,4;B^{\rm DI})=\rho_{b}(n,3;B^{\rm DI})$ directly. Thus, the value $\rho_{b}(n,2;B^{\rm DI})$ follows from Theorem \ref{DI1}. In addition, the proof of Theorem \ref{D-Iredundancy} shows that the code $\mathcal{C}_2(n,t,2\lceil \log_2n\rceil+3)$ is in fact an $(n,4;B^{\rm DI})$-reconstruction code, and the proof is complete.
\end{proof}

\section{Reconstruction codes with error-balls $B^{\rm SD}$, $B^{\rm SI}$ and $B^{\rm edit}$}\label{Sec:4}

In connection of the preceding discussion, for instance, see Theorems \ref{substitution} and \ref{DI1}, we mention without proof the following result for the optimal redundancy of an $(n,N;B_2)$-reconstruction code, where $B_2\in\{B^{\rm SD},B^{\rm SI}\}$.

\begin{coro}\label{SDI1}
Consider the error-ball $B_2\in\{B^{\rm SD},B^{\rm SI}\}$. Then
$$\rho_{b}(n,N;B_2)=\frac{3}{2}\log_2n+\Theta(1)\ {\rm for}\ N\in\{1,2\}.$$
\end{coro}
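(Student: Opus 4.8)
The plan is to prove Corollary~\ref{SDI1} by establishing matching upper and lower bounds on the redundancy, just as in the proofs of Theorems~\ref{substitution} and~\ref{DI1}. The key structural input is Proposition~\ref{SDIcondition}, which controls the intersection sizes of the error-balls $B^{\rm SD}$ and $B^{\rm SI}$. By part~(iii) of that proposition, $|B^{\rm SD}(\textbf{x})\cap B^{\rm SD}(\textbf{y})|=|B^{\rm SI}(\textbf{x})\cap B^{\rm SI}(\textbf{y})|$ for all distinct $\textbf{x},\textbf{y}\in U_n$, so it suffices to treat one of the two channels, say $B^{\rm SD}$, and the result for $B^{\rm SI}$ follows identically.

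The first observation I would make is that for $B_2\in\{B^{\rm SD},B^{\rm SI}\}$ the intersection size is always even. Indeed, combining Lemma~\ref{Scondition} (which gives intersection $2$ or $0$ for pure substitutions) with the run-based counting behind Proposition~\ref{D-Icondition}, one checks from the case analysis in Proposition~\ref{SDIcondition}(i)--(ii) that $|B_2(\textbf{x})\cap B_2(\textbf{y})|\in\{0,2,3,4\}$ where the value $3$ arises exactly for Type-B-confusable pairs. The cleaner route, however, is simply to note that the threshold structure collapses the cases $N=1$ and $N=2$ together: a code $\mathcal{C}\subseteq U_n$ is an $(n,2;B_2)$-reconstruction code iff $\nu(\mathcal{C};B_2)<2$, i.e.\ iff $|B_2(\textbf{x})\cap B_2(\textbf{y})|\le 1$ for all distinct codewords. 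By Proposition~\ref{SDIcondition}, the smallest positive intersection value is $2$ (attained whenever $d_{\rm H}(\textbf{x},\textbf{y})=2$ forces intersection at least $2$, and when $d_{\rm H}\ge 4$ the intersection is at most $2$), so in fact $|B_2(\textbf{x})\cap B_2(\textbf{y})|\le 1$ is equivalent to $|B_2(\textbf{x})\cap B_2(\textbf{y})|=0$. Hence the conditions $\nu(\mathcal{C};B_2)<1$ and $\nu(\mathcal{C};B_2)<2$ define exactly the same family of codes, giving $\rho_b(n,1;B_2)=\rho_b(n,2;B_2)$. This is the same collapsing phenomenon used for $B^{\rm S}$ in Theorem~\ref{substitution}.

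It therefore remains to pin down the common value $\rho_b(n,1;B_2)=\frac{3}{2}\log_2 n+\Theta(1)$. For the lower bound I would argue that any $(n,1;B_2)$-reconstruction code is in particular an $(n,1;B^{\rm D})$-reconstruction code, since $B^{\rm D}(\textbf{x})\subseteq B^{\rm SD}(\textbf{x})$ means disjointness of the larger balls forces disjointness of the deletion balls; Theorem~\ref{deletion1} then yields the lower bound $\frac{3}{2}\log_2 n-O(1)$ on the redundancy. For the matching upper bound, the BVT code of Definition~\ref{restrictedVT} (or a minor strengthening) already corrects a single deletion, and the balanced single-substitution-plus-deletion setting can be handled by imposing one extra congruence on the weight or a VT-type checksum, costing only $O(1)$ additional redundancy while keeping a codeword count of order $\binom{n}{n/2}/n$; invoking Lemma~\ref{binomial} gives redundancy at most $\frac{3}{2}\log_2 n+O(1)$.

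The main obstacle I anticipate is the upper-bound construction: one must exhibit an explicit balanced code whose $B^{\rm SD}$-balls are pairwise disjoint and whose size is $\Theta(\binom{n}{n/2}/n)$, rather than merely a single-deletion-correcting code. The subtlety is that a substitution followed by a deletion can mimic a single deletion of a different symbol, so the VT congruence alone need not separate all confusable pairs; the fix is to combine the VT deletion-check with control on the Type-B-confusable pairs flagged in Proposition~\ref{SDIcondition}(i), which is precisely why the authors state this as a corollary "in connection with the preceding discussion" and omit the routine verification. Since both bounds are $\frac{3}{2}\log_2 n+\Theta(1)$, the equalities for $N\in\{1,2\}$ follow.
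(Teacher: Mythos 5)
There is a genuine gap, and it sits at the heart of your argument: your claim that for $B_2\in\{B^{\rm SD},B^{\rm SI}\}$ the intersection of error-balls of distinct balanced words is never equal to $1$ (so that $(n,1;B_2)$- and $(n,2;B_2)$-reconstruction codes coincide) is false. Proposition~\ref{SDIcondition}(ii) only asserts $|B_2(\textbf{x})\cap B_2(\textbf{y})|\le 2$ when $d_{\rm H}(\textbf{x},\textbf{y})\ge 4$; it does not exclude the value $1$, and the value $1$ does occur. Since words of different lengths never coincide, $|B^{\rm SD}(\textbf{x})\cap B^{\rm SD}(\textbf{y})|=|B^{\rm S}(\textbf{x})\cap B^{\rm S}(\textbf{y})|+|B^{\rm D}(\textbf{x})\cap B^{\rm D}(\textbf{y})|$, and the second summand can equal $1$ while the first is $0$. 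Concretely, take $\textbf{x}=100110$ and $\textbf{y}=001101$ in $U_6$ (pad with a common balanced prefix/suffix for general $n$): here $d_{\rm H}(\textbf{x},\textbf{y})=4$, so the substitution balls are disjoint, while $B^{\rm D}(\textbf{x})\cap B^{\rm D}(\textbf{y})=\{00110\}$, giving $|B^{\rm SD}(\textbf{x})\cap B^{\rm SD}(\textbf{y})|=1$; note $\textbf{x},\textbf{y}$ are not Type-A-confusable, so this is consistent with Proposition~\ref{D-Icondition}. (The parity argument that forces even intersections works for $B^{\rm edit}$, where deletion and insertion contributions pair up by Proposition~\ref{D-Icondition}(iii), but not for $B^{\rm SD}$ or $B^{\rm SI}$.) Consequently your reduction of $N=2$ to $N=1$, and with it your $N=2$ lower bound via Theorem~\ref{deletion1}, collapses: an $(n,2;B_2)$-code need not be an $(n,1;B^{\rm D})$-code. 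The repair is to route the lower bound through substitutions rather than deletions: since $B^{\rm S}(\textbf{x})\cap B^{\rm S}(\textbf{y})\subseteq B_2(\textbf{x})\cap B_2(\textbf{y})$, every $(n,N;B_2)$-reconstruction code is an $(n,N;B^{\rm S})$-reconstruction code, and Theorem~\ref{substitution} yields the bound $\frac{3}{2}\log_2 n-O(1)$ simultaneously for $N=1$ and $N=2$; equivalently, Proposition~\ref{SDIcondition}(i) shows any pair at Hamming distance $2$ has intersection at least $2$, so an $(n,2;B_2)$-code has minimum distance $4$ and Example~\ref{substitution1} applies.

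Your upper bound is also not actually established: you assert that some extra congruence ``costing only $O(1)$ additional redundancy'' fixes matters, but no code is exhibited and no disjointness is verified; moreover your worry that a substitution followed by a deletion can mimic a deletion is moot in this model, because $B^{\rm SD}$ is a union of single-error balls, so outputs of lengths $n$ and $n-1$ never collide. In fact the BVT code of Definition~\ref{restrictedVT} already suffices as it stands: it corrects one deletion, and it contains no pair at Hamming distance $2$, since two balanced words differing only in positions $i$ and $j$ have VT syndromes differing by $i-j\not\equiv 0 \pmod{n+1}$. Hence both the substitution balls and the deletion balls of codewords are pairwise disjoint, so $\nu(BVT_a(n);B^{\rm SD})=0$, and choosing the residue $a$ by pigeonhole gives size at least $\binom{n}{n/2}/(n+1)$, i.e.\ redundancy $\frac{3}{2}\log_2 n+O(1)$ by Lemma~\ref{binomial}. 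This matches the intent of the paper, which states the corollary without proof by pointing to Theorems~\ref{substitution} and~\ref{DI1} for the lower bound and has available (just after, as $BLT_a(n)$) a balanced single-edit-correcting code for the upper bound.
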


We are now in a position to evaluate the value $\rho_{b}(n,N;B_2)$ for $B_2\in\{B^{\rm SD},B^{\rm SI}\}$ when $N\geq 3$.

\begin{thm}\label{SDIredundancy}
Consider the error-ball $B_2\in\{B^{\rm SD},B^{\rm SI}\}$. Then
\[
\rho_{b}(n,N;B_2)=\begin{cases}
\frac{1}{2}\log_2n+\log_2\log_2n+\Theta(1), & N=3,\\
\Delta+1-o(1), & N=4,\\
\Delta, & N\geq 5.
\end{cases}
\]
\end{thm}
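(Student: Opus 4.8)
The plan is to treat each value of $N$ separately, exploiting the intersection-size dichotomy recorded in Proposition \ref{SDIcondition}, which says that for distinct balanced words the size $|B_2(\textbf{x})\cap B_2(\textbf{y})|$ is governed entirely by whether $\textbf{x},\textbf{y}$ are Type-A- or Type-B-confusable and with what parameter $m$. Since part (iii) of that proposition gives $|B^{\rm SD}(\textbf{x})\cap B^{\rm SD}(\textbf{y})|=|B^{\rm SI}(\textbf{x})\cap B^{\rm SI}(\textbf{y})|$, I only need to argue for one of the two channels, say $B^{\rm SD}$, and the conclusion for the other follows verbatim. The case $N\geq 5$ is immediate from the last line of Proposition \ref{SDIcondition}. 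So the real content is the cases $N=3$ and $N=4$, each of which splits into a construction (upper bound on redundancy) and a converse (lower bound), and I would match the two.

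For $N=3$ I would first translate the reconstruction constraint $\nu(\mathcal{C};B_2)<3$, i.e. $|B_2(\textbf{x})\cap B_2(\textbf{y})|\leq 2$ for all distinct $\textbf{x},\textbf{y}\in\mathcal{C}$, into a confusability condition. By Proposition \ref{SDIcondition}(i)--(ii), an intersection of size $3$ or $4$ occurs exactly when the words are Type-B-confusable or Type-A-confusable with $m=1$; an intersection of size $2$ (which is allowed) can come from Type-A-confusability with $m\geq 2$. Hence $\nu(\mathcal{C};B_2)<3$ is equivalent to forbidding both Type-B-confusable pairs and Type-A-confusable-with-$m=1$ pairs. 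For the converse lower bound I would invoke Proposition \ref{lowerbound}(ii): since no pair is Type-B-confusable, the redundancy is at least $\tfrac12\log_2 n+\log_2\log_2 n-O(1)$. For the matching construction I would reuse the period-restricted family $\mathcal{R}_2^b(n,2,P)$ from Lemma \ref{case2} together with an inversion-number constraint modulo $1+P/2$, exactly as in the proof of Theorem \ref{D-Iredundancy}; the point is that restricting to bounded-period words with a fixed inversion residue already kills Type-A-confusability, and I would check that the same constraints also exclude Type-B-confusable and Type-A-confusable-with-$m=1$ pairs, so that with $P=2\lceil\log_2 n\rceil+3$ the resulting code has redundancy $\tfrac12\log_2 n+\log_2\log_2 n+O(1)$, matching the lower bound.

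For $N=4$ the allowed intersections are those of size at most $3$; by Proposition \ref{SDIcondition} the only forbidden configuration is the size-$4$ one, namely Type-A-confusability with $m=1$. Equivalently, $\nu(\mathcal{C};B_2)<4$ forbids precisely the Type-A-with-$m=1$ pairs, which are exactly the Type-B-confusable-with-$m=1$ pairs in the complementary coordinate reading (here I would be careful to state the equivalence cleanly from Definition \ref{confusableAB}). The converse $\Delta+1-o(1)$ should then come from Proposition \ref{lowerbound}(iii), which gives exactly this bound when no pair is Type-B-confusable with $m=1$. For the construction I would take a single linear constraint on top of $U_n$ — for instance fixing the parity of the number of a certain local pattern, or an inversion-number parity — designed so that the two words of any Type-A-with-$m=1$ pair fall into different classes; splitting $U_n$ into two such classes and keeping the larger one costs only one bit, giving redundancy $\Delta+1+o(1)$ and matching the lower bound up to the $o(1)$.

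The main obstacle I anticipate is the $N=4$ converse-construction match, because $\Delta+1-o(1)$ is an unusually sharp target: the construction must lose essentially exactly one bit, so I must exhibit a two-coloring of $U_n$ in which Type-A-confusable-with-$m=1$ pairs are always bichromatic and the two color classes are asymptotically equal in size. Verifying that a natural local invariant (such as the parity of $\sum_i i\,x_i$ or of an occurrence count of the pattern $10$ versus $01$) genuinely separates every such pair, while also confirming the balanced-word counting keeps both classes of size $\tfrac12\binom{n}{n/2}(1-o(1))$, is the delicate step; the $N=3$ analysis, by contrast, reduces fairly mechanically to the already-established Lemma \ref{case2} and Proposition \ref{lowerbound}.
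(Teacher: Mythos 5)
Your overall architecture matches the paper's: $N\geq 5$ is immediate from Proposition \ref{SDIcondition}, the $N=3$ converse comes from Proposition \ref{lowerbound}(ii) (no Type-B pairs), the $N=4$ converse from Proposition \ref{lowerbound}(iii) (Type-A with $m=1$ is literally the same pattern set as Type-B with $m=1$), and the $N=4$ upper bound from a one-bit splitting of $U_n$. Your separating invariant for $N=4$ (parity of ${\rm Inv}$, or of $\sum_i i x_i$) works just as well as the paper's choice (parity of $\sum_{i=1}^{n/2}x_{2i}$), since a Type-A-with-$m=1$ swap changes ${\rm Inv}$ by exactly $1$; and your worry about the two color classes being asymptotically equal is unnecessary, because pigeonhole gives one class of size at least $\tfrac{1}{2}\binom{n}{n/2}$, hence redundancy at most $\Delta+1$, which already matches the $\Delta+1-o(1)$ converse.

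The genuine gap is in your $N=3$ construction. You propose to reuse the code $\mathcal{C}_2(n,t,P)=\{\textbf{c}\in \mathcal{R}_2^b(n,2,P): {\rm Inv}(\textbf{c})\equiv t\pmod {1+P/2}\}$ from Theorem \ref{D-Iredundancy} and ``check'' that it also excludes Type-B pairs; that check fails. If $\textbf{x}=\textbf{u}01^m\textbf{v}$ and $\textbf{y}=\textbf{u}1^m0\textbf{v}$ are Type-B-confusable with parameter $m$, then ${\rm Inv}(\textbf{y})-{\rm Inv}(\textbf{x})=m$, and membership in $\mathcal{R}_2^b(n,2,P)$ only forces $m\leq P$: the run $1^m$ is a $1$-periodic subword, so it is bounded by $P$, whereas only the Type-A pattern $(10)^m$, being $2$-periodic of length $2m$, inherits the stronger bound $m\leq P/2$. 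Hence for $m=1+P/2\leq P$ the two words have equal residues modulo $1+P/2$, and one can pad $\textbf{u},\textbf{v}$ (e.g., with $(1100)^k$ blocks) to satisfy the balance and period constraints and realize any residue class $t$; such a pair has $|B_2(\textbf{x})\cap B_2(\textbf{y})|=3$, so $\mathcal{C}_2(n,t,P)$ is not an $(n,3;B_2)$-reconstruction code. This is precisely why the paper does not reuse $\mathcal{C}_2$: it instead takes $\mathcal{D}_2(n,t,P)=\{\textbf{c}\in \mathcal{R}_2^b(n,1,P): {\rm Inv}(\textbf{c})\equiv t\pmod {1+P}\}$, whose modulus $1+P$ strictly exceeds the maximal run length $P$, so every Type-B inversion gap $m\in\{2,\ldots,P\}$ and the Type-A-$m=1$ gap $1$ are nonzero modulo $1+P$. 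Your construction is repaired the same way: keep $\mathcal{R}_2^b(n,2,P)$ if you wish, but enlarge the modulus to $1+P$ (anything exceeding $P$); with $P=\Theta(\log_2 n)$ this still yields redundancy $\tfrac{1}{2}\log_2 n+\log_2\log_2 n+O(1)$, and the rest of your argument then goes through.
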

\begin{proof}
Assume that
$$\mathcal{D}_2(n,t,P)=\{\textbf{c}\in \mathcal{R}_2^b(n,1,P): {\rm Inv}(\textbf{c})\equiv t\pmod {1+P}\},$$
where $t\in \Z_{1+P}$. It follows then from Proposition \ref{SDIcondition}(i) that
$$|B_2(\textbf{x})\cap B_2(\textbf{y})|<3,\ {\rm for}\ \textbf{x}\neq\textbf{y}\in \mathcal{D}_2(n,t,P).$$
If $P=\lceil \log_2n\rceil+1$. Then we have
$$|\mathcal{D}_2(n,t,\lceil \log_2n\rceil+1)|\geq \binom{n}{n/2}/(2+2P).$$
This implies by Theorem \ref{binomial} and Lemma \ref{case1}, that $\mathcal{D}_2(n,t,\lceil \log_2n\rceil+1)$ has redundancy at most
$\frac{1}{2}\log_2n+\log_2\log_2n+O(1)$.
On the other hand, let $\mathcal{C}$ be an $(n,3;B_2)$-reconstruction code.
It is easy to check that every pair of different words in $\mathcal{C}$ are not Type-B-confusable (see Definition \ref{confusableAB}). Combining this and Theorem \ref{lowerbound}(ii), we give the result for $N=3$.

In the case of $N=4$. Define the set
$$\mathcal{C}_a=\{(x_1,x_2,\ldots,x_n)\in U_n:\sum_{i=1}^{n/2}x_{2i}\equiv a\pmod 2\},$$
where $a\in \Z_2$. Then the pigeonhole principle implies that there is a choice of $a\in \Z_2$ such that the set $\mathcal{C}_a$ has size at lease half of $U_n$. By Proposition \ref{SDIcondition}(i),
$$B_2(\textbf{x})\cap B_2(\textbf{y})<4,\ {\rm for}\ \textbf{x}\neq\textbf{y}\in C_a.$$
Thus, $\mathcal{C}_a$ is an $(n,4;B_2)$-reconstruction code with redundancy at most $n-\log_2\frac{\binom{n}{n/2}}{2}=\Delta+1$.
On the other hand, every distinct pair of words in an $(n,4;B_2)$-reconstruction code are not Type-B-confusable with $m=1$; otherwise the two words are Type-A-confusable with $m=1$, and Proposition \ref{SDIcondition} indicates that the size of their error-balls equals 4, a contradiction. Then Theorem \ref{lowerbound}(iii) implies the desired result.
\end{proof}

Next, we consider the error-ball $B^{\rm edit}$. First, we define the balanced version of the Levenshtein code proposed in \cite{L65} as follows, which is a generalization of Definition \ref{restrictedVT},
$$BLT_a(n)=\{(x_1,x_2,\ldots,x_n)\in U_n:\sum_{i=1}^nix_i \equiv a\pmod {2n}\}.$$
In \cite{L65}, Levenshtein showed that the code $BLT_a(n)$ is capable of correcting one deletion, one insertion or one substitution. So there is a choice of $a\in \Z_{2n}$  such that $|BLT_a(n)|\geq \frac{\binom{n}{n/2}}{2n}$. This leads to half of the following theorem.

\begin{thm}\label{E1}
Consider the error-ball $B^{\rm edit}$. We have that
$$\rho_{b}(n,N;B^{\rm edit})=\frac{3}{2}\log_2n+\Theta(1)\ {\rm for}\ N\in \{1,2\}.$$
\end{thm}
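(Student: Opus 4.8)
```latex
The plan is to prove the theorem by establishing matching upper and lower
bounds on the redundancy. Since $|B^{\rm edit}(\textbf{x})\cap
B^{\rm edit}(\textbf{y})|\in\{0,2,4,6\}$ for distinct balanced words by
Proposition~\ref{Econdition}, the condition $\nu(\mathcal{C};B^{\rm edit})<2$
is equivalent to $\nu(\mathcal{C};B^{\rm edit})<1$, that is, the two
error-balls are disjoint. Hence I expect $\rho_b(n,1;B^{\rm edit})=
\rho_b(n,2;B^{\rm edit})$ immediately from the gap in the possible
intersection values, and it suffices to pin down the single case $N=1$.

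For the upper bound on redundancy (lower bound on code size), I would use the
balanced Levenshtein code $BLT_a(n)$ introduced just above the statement.
Since Levenshtein showed $BLT_a(n)$ corrects one deletion, insertion, or
substitution, any such code is an $(n,1;B^{\rm edit})$-reconstruction code:
distinct codewords have disjoint edit-balls. The pigeonhole principle over the
$2n$ residue classes modulo $2n$ gives a choice of $a$ with
$|BLT_a(n)|\geq \binom{n}{n/2}/(2n)$, so by Lemma~\ref{binomial} the
redundancy is at most $n-\log_2\binom{n}{n/2}+\log_2(2n)=\frac{1}{2}\log_2 n+
\log_2(2n)+O(1)=\frac{3}{2}\log_2 n+O(1)$. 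This is the ``half of the theorem''
already signalled in the text preceding the statement.

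For the matching lower bound, the key observation is that any
$(n,1;B^{\rm edit})$-reconstruction code is in particular an
$(n,1;B^{\rm D})$-reconstruction code, since $B^{\rm D}(\textbf{x})\subseteq
B^{\rm edit}(\textbf{x})$ forces disjoint deletion-balls whenever the
edit-balls are disjoint. Therefore its size is bounded above by $A^{\rm D}(n)$,
and Theorem~\ref{deletion1} gives $A^{\rm D}(n)\leq 2(\binom{n}{n/2}-2)/(n-2)$,
whence the redundancy is at least $\frac{3}{2}\log_2 n-O(1)$ by
Lemma~\ref{binomial}. Combining the two bounds yields
$\rho_b(n,1;B^{\rm edit})=\frac{3}{2}\log_2 n+\Theta(1)$, and the opening
reduction extends this to $N=2$.

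The only subtlety, and the step I would check most carefully, is the claim
that an $(n,1;B^{\rm edit})$-reconstruction code really does inherit the
single-deletion-correcting property in the form needed for
Theorem~\ref{deletion1}; this is precisely the containment
$B^{\rm D}\subseteq B^{\rm edit}$ together with the definition of
$\nu(\mathcal{C};B_2)<1$ meaning pairwise-disjoint balls, so no genuine
obstacle arises. The argument is essentially a monotonicity observation in the
error-ball hierarchy, mirroring Theorems~\ref{substitution} and \ref{DI1}.
```
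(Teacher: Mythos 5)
Your proposal is correct and takes essentially the same approach as the paper: the reduction from $N=2$ to $N=1$ via the gap in intersection sizes from Proposition~\ref{Econdition}, the upper bound on redundancy from the balanced Levenshtein code $BLT_a(n)$ with a pigeonhole choice of $a$, and the lower bound by observing that an $(n,1;B^{\rm edit})$-reconstruction code is in particular an $(n,1;B^{\rm D})$-reconstruction code, so Theorem~\ref{deletion1} applies. The paper phrases the last step slightly more generally (listing all sub-balls $B_2\in\{B^{\rm S},B^{\rm D},B^{\rm I},B^{\rm SD},B^{\rm SI}\}$), but the substance of the argument is identical to yours.
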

\begin{proof}
Recall from Proposition \ref{Econdition} that $\rho_{b}(n,1;B^{\rm edit})=\rho_{b}(n,2;B^{\rm edit})$. Additionally, by the code $BLT_a(n)$ defined above, it suffices to show that the value $\rho_{b}(n,N;B^{\rm edit})$ is lower bounded by $\frac{3}{2}\log_2n+\Theta(1)$. Notice that an $(n,1;B^{\rm edit})$-reconstruction code is also an $(n,1;B_2)$-reconstruction code with $B_2\in \{B^{\rm S},B^{\rm D},B^{\rm I},B^{\rm SD},B^{\rm SI}\}$, and then the theorem follows.
\end{proof}

We now complete the the evaluation of  of $\rho_{b}(n,N;B^{\rm edit})$ with $N\geq 3$.

\begin{thm}\label{DIredundancy}
Consider the error-ball $B^{\rm edit}$. Then
\[
\rho_{b}(n,N;B^{\rm edit})=\begin{cases}
\frac{1}{2}\log_2n+\log_2\log_2n+\Theta(1), & N\in \{3,4\},\\
\Delta+1-o(1), & N\in \{5,6\},\\
\Delta, & N\geq 7.
\end{cases}
\]
\end{thm}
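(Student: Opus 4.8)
The plan is to leverage the fact, recorded in Proposition \ref{Econdition}, that $|B^{\rm edit}(\textbf{x})\cap B^{\rm edit}(\textbf{y})|$ is always even. Because of this parity, the defining constraint $\nu(\mathcal{C};B^{\rm edit})<N$ coincides for $N=3$ and $N=4$ (both forbid exactly the intersection sizes $4$ and $6$) and coincides for $N=5$ and $N=6$ (both forbid exactly the size $6$). Hence $\rho_{b}(n,3;B^{\rm edit})=\rho_{b}(n,4;B^{\rm edit})$ and $\rho_{b}(n,5;B^{\rm edit})=\rho_{b}(n,6;B^{\rm edit})$, so it suffices to settle the two representative cases $N=3$ and $N=5$; the case $N\ge 7$ is already contained in Proposition \ref{Econdition}.

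For $N\in\{3,4\}$, Proposition \ref{Econdition} shows that forbidding the sizes $4$ and $6$ is precisely the requirement that no pair of codewords is Type-A-confusable (ruling out the size $6$ via $m=1$ and the size $4$ with $d_{\rm H}\ge 4$ via $m\ge 2$) and no pair is Type-B-confusable (ruling out the remaining size $4$ with $d_{\rm H}=2$). The lower bound $\frac12\log_2 n+\log_2\log_2 n-O(1)$ then follows immediately from Proposition \ref{lowerbound}(i). For the matching upper bound I would merge the two devices of Theorems \ref{D-Iredundancy} and \ref{SDIredundancy}: put $P=2\lceil\log_2 n\rceil+3$ and set
\[
\mathcal{E}_2(n,t,P)=\bigl\{\textbf{c}\in\mathcal{R}_2^b(n,2,P):{\rm Inv}(\textbf{c})\equiv t\pmod{1+P}\bigr\}.
\]
Membership in $\mathcal{R}_2^b(n,2,P)$ bounds every period-$2$ subword by $P$; since a Type-A pair with parameter $m$ differs in an alternating block of length $2m$ that is such a subword, this forces $m\le P/2$, and since every run has length at most $P$, a Type-B pair satisfies $m\le P$. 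By \cite[Theorem 17]{CKNY21} a confusable pair of either type differs in inversion number by exactly $m$, and as $1\le m\le P<1+P$ in both situations, the single congruence modulo $1+P$ separates every such pair. Thus $\mathcal{E}_2(n,t,P)$ has all pairwise intersections in $\{0,2\}$. Lemma \ref{case2} gives $|\mathcal{R}_2^b(n,2,P)|\ge \binom{n}{n/2}/2$, so by the pigeonhole principle some $t$ yields $|\mathcal{E}_2(n,t,P)|\ge \binom{n}{n/2}/\bigl(2(1+P)\bigr)$, whence by Lemma \ref{binomial} the redundancy is at most $\frac12\log_2 n+\log_2(2(1+P))+O(1)=\frac12\log_2 n+\log_2\log_2 n+O(1)$.

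For $N\in\{5,6\}$, Proposition \ref{Econdition} shows that the only forbidden intersection is the size $6$, occurring exactly for pairs that are Type-A-confusable with $m=1$; and since for $m=1$ the set $\{10,01\}$ is at once the Type-A and the Type-B form, this is the same as forbidding Type-B-confusable pairs with $m=1$. The lower bound $\Delta+1-o(1)$ is then immediate from Proposition \ref{lowerbound}(iii). For the upper bound I would reuse the code $\mathcal{C}_a=\{(x_1,\ldots,x_n)\in U_n:\sum_{i=1}^{n/2}x_{2i}\equiv a\pmod 2\}$ from the proof of Theorem \ref{SDIredundancy}. If $\textbf{x}$ and $\textbf{y}$ are Type-A-confusable with $m=1$, they differ only by swapping an adjacent $10$ and $01$, so the two differing coordinates are consecutive and exactly one of them is even; hence $\sum_i x_{2i}$ and $\sum_i y_{2i}$ have opposite parities and cannot both satisfy the congruence defining $\mathcal{C}_a$. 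Thus $\mathcal{C}_a$ contains no Type-A-confusable pair with $m=1$ and is an $(n,5;B^{\rm edit})$-reconstruction code; choosing $a$ by pigeonhole gives $|\mathcal{C}_a|\ge |U_n|/2$ and redundancy at most $\Delta+1$.

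The size estimates are routine, coming verbatim from Lemmas \ref{case2} and \ref{binomial} together with the pigeonhole principle. The one point requiring genuine care — the main obstacle — is the choice of a single modulus in $\mathcal{E}_2(n,t,P)$ that neutralises Type-A and Type-B confusability simultaneously: this succeeds only because bounding period-$2$ subwords confines the Type-A parameter to $[1,P/2]$ while bounding runs confines the Type-B parameter to $[2,P]$, so that both inversion-number gaps lie strictly between $0$ and $1+P$ and are therefore detected by the same congruence. Verifying that the period-$2$ constraint of $\mathcal{R}_2^b(n,2,P)$ really caps the Type-A parameter (and not merely the length of the explicitly displayed alternating block, which could sit inside a longer alternating run) is the step I would write out most carefully.
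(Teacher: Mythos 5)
Your proposal is correct and follows essentially the same route as the paper: the parity reduction from $N\in\{4,6\}$ to $N\in\{3,5\}$, the code $\mathcal{E}_2(n,t,P)$ with $P=2\lceil\log_2 n\rceil+3$ and modulus $1+P$ for the case $N\in\{3,4\}$, the even-index parity code $\mathcal{C}_a$ for $N\in\{5,6\}$, and Proposition \ref{lowerbound} for the matching lower bounds. The only immaterial deviation is that for $N\in\{3,4\}$ you invoke Proposition \ref{lowerbound}(i) via the absence of Type-A-confusable pairs where the paper invokes part (ii) via Type-B (both apply, since an $(n,3;B^{\rm edit})$-code excludes both types); your careful verification that the single congruence modulo $1+P$ separates Type-A pairs (with $m\le P/2$) and Type-B pairs (with $m\le P$) simultaneously is precisely the step the paper compresses into ``Clearly, we have $|B^{\rm edit}(\textbf{x})\cap B^{\rm edit}(\textbf{y})|<4$.''
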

\begin{proof}
The case $N\geq 7$ is trivial. By Proposition \ref{Econdition},
$$\rho_{b}(n,4;B^{\rm edit})=\rho_{b}(n,3;B^{\rm edit})\ {\rm and}\ \rho_{b}(n,6;B^{\rm edit})=\rho_{b}(n,5;B^{\rm edit}).$$
For $N\in \{3,4\}$, suppose that
$$\mathcal{E}_2(n,t,P)=\{\textbf{c}\in \mathcal{R}_2^b(n,2,P): {\rm Inv}(\textbf{c})\equiv t\pmod {1+P}\},$$
where $t\in \Z_{1+P}$. Clearly, we have
$$|B^{\rm edit}(\textbf{x})\cap B^{\rm edit}(\textbf{y})|<4\ {\rm(in\ fact \leq 2)},\ {\rm for}\ \textbf{x}\neq\textbf{y}\in \mathcal{E}_2(n,t,P).$$
Assume that $P=2\lceil \log_2n\rceil+3$. Then we have $|\mathcal{E}_2(n,t,\lceil 2\log_2n\rceil+3)|\geq \binom{n}{n/2}/(2+2P)$.
This implies by Lemmas \ref{binomial} and \ref{case2}, that $\mathcal{E}_2(n,t,2\lceil \log_2n\rceil+3)$ has redundancy at most
$\frac{1}{2}\log_2n+\log_2\log_2n+O(1)$. On the other hand, let $\mathcal{C}$ be an $(n,N;B^{\rm edit})$-reconstruction code.
It is easy to check that every pair of different words in $\mathcal{C}$ are not Type-B-confusable. Consequently, we have $\rho_{b}(n,N;B^{\rm edit})=\frac{1}{2}\log_2n+\log_2\log_2n+\Theta(1)$.

In the case of $N\in\{5,6\}$, the code $\mathcal{C}_a$ defined in the proof of Theorem \ref{SDIredundancy} is exactly an $(n,N;B^{\rm edit})$-reconstruction code with redundancy at most $n-\log_2\frac{\binom{n}{n/2}}{2}=\Delta+1$. Combining this and Theorem \ref{lowerbound}(iii), we obtain the value $\rho_{b}(n,N;B^{\rm edit})$ immediately.
\end{proof}

\section{Conclusion}\label{Sec:5}

In this paper, we completely determine the asymptotic optimal redundancy for  balanced binary reconstruction codes which are affected by  single edits, i.e., one substitution, one deletion, one insertion and their combinations. It is interesting to notice that for all possible single edits, the redundancy of an asymptotically optimal balanced reconstruction code gradually decreases from $\frac{3}{2}\log _2 n+O(1)$ to $\frac{1}{2}\log_2n+\log_2\log_2n+O(1)$, and finally to $\frac{1}{2}\log_2n+O(1)$ but with different speeds.

 Because of the balanced property, the optimal redundancy is not surprisingly bigger than the unbalanced one studied in \cite{CKNY21} with the same noisy channel. However, if we define
$$\rho'_b(n,N;B_2)=\min\{(n-\Delta)-\log_2 |\mathcal{C}|: \mathcal{C}\subseteq U_n \ {\rm and} \  \nu(\mathcal{C};B_2)<N\},$$ where $\Delta$ is the redundancy of $U_n$,
then the asymptotical redundancy here is consistent with \cite{CKNY21} in binary case. This in turn implies that the balanced constraint does not reduce the proportion of the $(n,N;B_2)$-reconstruction code in the corresponding codebook.

Moreover, it would be interesting to investigate the sequence reconstruction problem constrained in the balanced quaternary sequences, for instance, see \cite{TD10,YT14} for a description of this family of sequences. The case of noisy channel with $t$-deletion (insertion) error-balls will be considered as another potential path for further work.

\vspace*{0.5cm}

\hspace*{-0.6cm}\textbf{Data availibility} Not applicable.\\

\hspace*{-0.6cm}\textbf{Code Availability} Not applicable.\\

\hspace*{-0.6cm}\textbf{\Large Declarations}\\

\hspace*{-0.6cm}\textbf{Conflict of interest} The authors have no conflicts of interest to declare that are relevant to the content of this paper.

\end{document}